\newtheorem{theorem}{Theorem}[section]
\newtheorem{corollary}[theorem]{Corollary}
\newtheorem{definition}[theorem]{Definition}
\newtheorem{lemma}[theorem]{Lemma}
\newtheorem{proposition}[theorem]{Proposition}
\newtheorem{remark}[theorem]{Remark}
\newtheorem{example}[theorem]{Example}
\definecolor{myblue}{RGB}{0,0,255}
\definecolor{mygreen}{RGB}{0,128,0}
\def\bZ {\mathbb{Z}}
\DeclarePairedDelimiter\floor{\lfloor}{\rfloor}
\begin{document}

\title{Synthesis and Arithmetic of Single Qutrit Circuits}

\author{Amolak Ratan Kalra}
\affiliation{Institute for Quantum Computing, University of Waterloo, Waterloo, Ontario, Canada}
\affiliation{David R. Cheriton School of Computer Science, University of Waterloo, Waterloo, Ontario, Canada}
\affiliation{Perimeter Institute for Theoretical Physics, Waterloo, Ontario, Canada}
\author{Michele Mosca}
\affiliation{Institute for Quantum Computing, University of Waterloo, Waterloo, Ontario, Canada}
\affiliation{Dept. of Combinatorics $\&$ Optimization, University of Waterloo, Waterloo, Ontario, Canada}
\affiliation{Perimeter Institute for Theoretical Physics, Waterloo, Ontario, Canada}
\author{Dinesh Valluri}
\affiliation{Institute for Quantum Computing, University of Waterloo, Waterloo, Ontario, Canada}
\affiliation{Dept. of Combinatorics $\&$ Optimization, University of Waterloo, Waterloo, Ontario, Canada}

\maketitle

\begin{abstract}
In this paper we study single qutrit circuits consisting of words over the Clifford$+\mathcal{D}$ cyclotomic gate set, where $\mathcal{D}=\text{diag}(\pm\xi^{a},\pm\xi^{b},\pm\xi^{c})$, $\xi$ is a primitive $9$-th root of unity and $a,b,c$ are integers. We characterize classes of qutrit unit vectors $z$ with entries in $\mathbb{Z}[\xi, \frac{1}{\chi}]$ based on the possibility of reducing their smallest denominator exponent (sde) with respect to $\chi := 1 - \xi,$ by acting an appropriate gate in Clifford$+\mathcal{D}$. We do this by studying the notion of `derivatives mod $3$' of an arbitrary element of $\mathbb{Z}[\xi]$ and using it to study the smallest denominator exponent of $HDz$ where $H$ is the qutrit Hadamard gate and $D \in \mathcal{D}.$ 
In addition, we reduce the problem of finding all unit vectors of a given sde to that of finding integral solutions of a positive definite quadratic form along with some additional constraints. As a consequence we prove that the Clifford$+\mathcal{D}$ gates naturally arise as gates with sde $0$ and $3$ in the group $U(3,\mathbb{Z}[\xi, \frac{1}{\chi}])$ of $3 \times 3$ unitaries with entries in $\mathbb{Z}[\xi, \frac{1}{\chi}]$. We illustrate the general applicability of these methods to obtain an exact synthesis algorithm for Clifford$+R$ and recover the previous exact synthesis algorithm in \cite{kmm}. The framework developed to formulate qutrit gate synthesis for Clifford$+\mathcal{D}$ extends to qudits of arbitrary prime power. 
\end{abstract}

\section{Introduction}
\label{intro}
Implementing a target unitary $U$ using a finite sequence of gates over a universal gate set $G$ is a fundamental problem in quantum computing.
\noindent This problem arises naturally in the context of quantum compilation, where the goal is to map an abstract instruction set to a physical architecture. The abstract instruction set usually specifies a quantum algorithm or protocol and the physical architecture or hardware usually refers to a quantum circuit.
\noindent A circuit synthesis algorithm takes as input a target unitary $U$, a universal gate set $G$ and a precision parameter $\epsilon$ and outputs a sequence of gates from $G$ that realizes/implements the unitary $U$ up-to $\epsilon$ precision efficiently.

\medskip

\noindent The general synthesis problem benefits greatly from the subroutine of exact synthesis which is: given a finite universal gate set $G$ and an element $g$ in the group generated by $G$ find a sequence $g_{1}...g_{k}$ in $G$ such that $g=g_{1}...g_{k}$. In the qubit case Kliuchnikov, Maslov and Mosca (KMM) \cite{kmm} solved the exact synthesis problem. In particular they showed that single qubit unitaries $U$ which are exactly implementable (decomposable) over the Clifford$+T$ gate set are equivalent to the $2 \times 2$ unitaries over the ring $\bZ[\frac{1}{\sqrt{2}},i]$. Great progress was made in designing efficient synthesis algorithms by using this exact synthesis result as a sub-routine, see \cite{kmm, ross2014optimal,yardapprox}. 

\medskip

\noindent Motivated by the results in these papers we aim to formulate and address the analogous problem of exact synthesis for qutrits over various families of universal gate sets.
The motivation for studying quantum circuits based on qutrits and more generally qudits comes from recent experimental progress made on implementing and controlling higher dimensional quantum systems in the lab \cite{low2023control, subramanian2023efficient,PhysRevLett.114.240401,PhysRevA.67.062313,C5CS00933B,Ringbauer_2022}.
From a theoretical perspective, magic state distillation protocols based on qudits have been shown to outperform their qubit counterparts, see \cite{PhysRevA.92.022312,PhysRevX.2.041021,prakash2024low}. Qudits have also been used to make the connection between quantum speed-up and contextuality more transparent \cite{Howard_2014, PhysRevA.101.010303}. There are several other experimental and theoretical advantages for considering qudits over qubits with reference to circuit complexity and ease of experimental implementations, these are discussed in some detail in \cite{qditrev}. The question of circuit synthesis for qudits has partially been addressed previously using normal forms, see \cite{prakash2018normal,GLAUDELL201954, prakash2021normal}.

\medskip

\noindent A direct generalization of the main result of \cite{kmm} fails for qutrit circuit synthesis. In other words, the Clifford$+T$ gates are insufficient to generate $U(3,R_{9,\chi}),$ the analogous group of $3 \times 3$ unitary matrices whose entries are in $R_{9,\chi} := \mathbb{Z}[\xi,\frac{1}{\chi}],$ where $\xi = e^{\frac{2\pi i}{9}},$ a primitive $9$-th root of unity and $\chi = 1-\xi.$ 
In this paper we rescue the situation for qutrits by proposing two gate sets that satisfy an algebraic characterization analogous to \cite{kmm}. To this end, we study the qutrit analogue of \cite{kmm} over the Clifford$+\mathcal{D}$ gate set where $\mathcal{D}=\text{diag}(\pm \xi^{a},\pm \xi^{b}, \pm \xi^{c}),$ for integers $a,b,c.$ The choice of these gates is not arbitrary, as we show in Section \ref{unit vectors} that the $\mathcal{D}$ gates and the $H$ gate arise as, in a sense, the only gates whose smallest denominator exponent (sde) is $0$ and $3$ respectively.  This sets us up with the question of whether Clifford$+\mathcal{D}$ adequately generates $U(3,R_{9,\chi}),$ i.e., given a matrix $M \in U(3,R_{9, \chi}),$ can it be generated by gates from Clifford$+\mathcal{D}$. If so, does there exist a reasonably efficient algorithm to do so? In this paper we make progress towards answering this question by a study of the smallest denominator exponents of unit vectors with entries in $R_{9,\chi}$.

\medskip

\noindent The following property holds for $G =$ Clifford$+T$ in $ U(2,R_{8,\chi})$ and $G = $ Clifford$+R$ in $ U(3,R_{3,\chi})$:

\begin{center}
    \textit{
P: Given any unit vector $z\in R_{n,\chi}^3$ with $sde\geq c $ there exists an element $g \in G$ such that $\text{sde}(gz,\chi )< \text{sde}(z,\chi)$}, where $c$ is a constant 
less than or equal to 3 
 only depending on $n.$
\end{center}

\noindent Existence of property P for a given $G \subset U(p,R_{\chi})$ allows us to find a sequence of letters  $g_1, \ldots, g_k$ such that the $\text{sde}(g_1\ldots g_k z) = c-1$. This reduces the problem of exact synthesis to a finite computation.

However, P need not hold for every $p^{l}$. For instance, whenever $p=3,l=2$ we prove in Theorem \ref{maintheorem} that there exists a $\Delta\in \bZ_{3}$ associated to the unit vector $z$ in such a way that there exists a $g \in $ Clifford$+\mathcal{D}$ such that  $sde(gz) < sde(z)$ iff $\Delta\neq -1$. Therefore, we identify an easily computable obstruction $\Delta$ for property P. While $\Delta$ is an obstruction, our method in fact helps us recover $g$ explicitly when $\Delta \neq -1.$    

\medskip

\noindent The general strategy of employing property P, if it is true, makes sense for a general prime power $n = p^{l}$ and in fact most of the tools we develop, such as the notion of derivatives mod $p$, do extend to such generality. However, in the case where $p = 2$ or $3,$ by Lemma \ref{equalsde} the sdes of each entry of a given unitary in $U_{p}(R_{\chi})$ are  equal, a property which fails to be true when $p \geq 5$. This failure makes the sde profile for $p \geq 5$ more complicated to study. A key idea to prove our results is the reduction of property P for a particular instance of $z$ to the existence of $\mathbb{Z}_{p}$- rational solutions of a system of polynomial equations over $\mathbb{Z}_{p}.$ For $p \leq 3$ the equations are simple enough (at most quadratic) to be solved completely.

\medskip

\noindent A brief summary of the contents are as follows. In Section \ref{prelims}, we review the background material on rings and universal gate sets required for the rest of the paper. In Section \ref{derivative}, we formulate the notion of `derivatives mod p’ of an element $f(\zeta)$ of $\mathbb{Z}[\zeta]$, where $\zeta$ is a primitive $p^{l}$-th root of unity. These are simply the elements $\frac{f^{(k)}(1)}{k!} \pmod p $ for $0 \leq k \leq \phi(p^l)$. We prove that this notion is well defined, i.e., if $f(x),g(x) \in \mathbb{Z}[x]$ such that $f(\zeta) = g(\zeta)$, then $f^{k}(1)/k! = g^{k}(1)/k! \pmod p$. This essentially follows from the fact that $f(x) = g(x) \pmod {\Phi_{p^l}(x)}$, which implies $f^{(k)}(1)/k! = g^{(k)}(1)/k!$ modulo an integral linear combination of $\Phi_{p^l}^{(k)}(1)/k!$ which divisible by $p$ for all $0 \leq k \leq \phi(p^l)$ by Lemma \ref{phider}. Here $\Phi_{n}(x)$ is the $n$-th cyclotomic polynomial, a monic irreducible polynomial over $\mathbb{Z}.$ An intuitive way to see these derivatives is as Taylor coefficients of $f$ at `$\zeta = 1$’. This can be made precise by considering the quotient map $\mathbb{Z}[\zeta] \xrightarrow{} \mathbb{Z}[\zeta]/(p)$ and looking at the image of $f(\zeta)$ in terms of $\chi = 1-\zeta$. We prove a criterion which is repeatedly used: for $0 \leq k \leq \phi(p^l),$ $\chi^{k}$ divides $f(\zeta)$ iff $f(1) = \frac{f’(1)}{1!} =\ldots= \frac{f^{k-1}(1)}{(k-1)!} = 0 \pmod p$. This is in fact easy to see using the Taylor series picture presented above.

\medskip

\noindent In Section \ref{unit vectors}, we find an algorithm to obtain all the unit vectors with a given sde. The unit vector condition over $R_{9,\chi} := \mathbb{Z}[\xi]_{\chi}$ with respect to the hermitian metric, namely $\sum |z_{i}|^2 = 1,$ can be reduced to solutions of three non-homogeneous integral quadratic forms in 18 variables. One of the quadratic forms  is positive definite therefore the set of its integral solutions is finite. This allows us to find all the unit vectors of a given sde by finding the integral solutions of the positive definite quadratic form and checking if the solution obtained satisfies the other two equations. This way of obtaining all the unit vectors of a particular sde is practical for small sde. Using this we can characterize the unitary matrices of sde $= 0$ as precisely the ones generated by the Pauli gates + $\mathcal{D}$ gates, while the gates with sde $= 3$ are, up to permutation of their columns, the $H$ gate, possibly multiplied by two gates from $\mathcal{D}$ on the left and right respectively. Therefore, we obtain the Clifford$+\mathcal{D}$ gates as natural objects in the arithmetic of the group $U(3,R_{9,\chi})$. 

\medskip

\noindent In Section \ref{mainresults}, we study the difference $\text{sde}(HDR^{\epsilon}z) - \text{sde}(z)$, where $D$ varies over the cyclotomic gate set $\mathcal{D}$ in the context of single qubit and single  qutrit gates. This question of dropping sde of $z$ can be tied to the question of divisibility of $z$ by powers of $\chi := 1- \zeta_{p^l}$. We reformulate this condition of divisibility by powers of $\chi $ in terms of derivatives vanishing mod $p$, using Theorem \ref{derivativecriterion}. This condition gives rise to a system of polynomial equations as explained in the proofs of Theorems \ref{cliffrthm} and \ref{maintheorem}. This is done for $p^l = 2^3$ to recover the main theorem of \cite{kmm}. For, $p^l = 3$ we give an exact synthesis algorithm for Clifford$+R$ and prove a result (as an easy consequence) analogous to the one in \cite{kmm}, i.e., Clifford$+R$ = $U(3,R_{\chi}).$ This algorithm is closely connected to the approximate synthesis algorithm for Clifford$+R$ explored in \cite{vadymanyons}.
   For $p^l = 3^2$ we obtain a criterion  $\text{sde}(HDR^{\epsilon}z) - \text{sde}(z) = -1 \text{ iff } \Delta \neq -1, \text{ where } \Delta \in \mathbb{Z}_3$ is an element associated to the unit vector $z$.

\medskip

\noindent \textbf{Historical Note}: After this paper appeared, the exact synthesis problem posed in this paper was completely solved for the group $U_3(\mathbb{Z}[\xi]_{\chi})$, where $\xi$ is a primitive $9$-th root of unity and $\chi = 1 - \xi$ by Shai Evra and Ori Parzanchevski in \cite{evra2024arithmeticity}. In particular, they showed that Clifford$+\mathcal{D}$ is a set of generators for $U_3(\mathbb{Z}[\xi]_{\chi})$ using the theory of buildings. This extends the qubit synthesis results of \cite{kmm}  to qutrits. In particular, the current work gives the frame work and proposes the candidate set of gates, namely Clifford$+\mathcal{D}$ gates in place of the Clifford$+T$ gates for qubits. The question of multi-qutrit exact synthesis was later addressed later by two independent works \cite{Glaudell_2024, kalra2024multi}.
\section{Preliminaries}
\label{prelims}
In this introductory section, we recall basic definitions of various qubit and qutrit gate sets, and the various cyclotomic rings and their localizations and the notions of smallest denominator exponent and greatest dividing exponent used throughout the paper.
\begin{definition}
The single qubit Clifford$+T$ gate set is generated by the following matrices:
\[
H=\frac{1}{\sqrt{2}}\begin{bmatrix}
1 &  1\\
1 & -1
\end{bmatrix}~~~~
T=\begin{bmatrix}
1 & 0\\
0 & \zeta_{8}
\end{bmatrix}
\]
where $\zeta_{8}=e^\frac{2\pi i}{8}$ is the primitive 8th root of unity.
\end{definition}
\noindent We will also be interested in qutrit analogue of the single qubit Clifford$+T$ group which is defined as follows:
\begin{definition}
The single qutrit Clifford$+T$ group is generated by the following matrices:
\[
H=-\frac{i}{\sqrt{3}}
\begin{bmatrix}
1 & 1 & 1\\
1 &  \omega & \omega^{2}\\
1 & \omega^{2} & \omega
\end{bmatrix}~~~~
S=\begin{bmatrix}
1 & 0 & 0\\
0 & \omega & 0\\
0 & 0 & 1
\end{bmatrix}~~~~
T=\begin{bmatrix}
\xi & 0 & 0\\
0 & 1 & 0\\
0 & 0 & \xi^{-1}
\end{bmatrix}
\]
where $\xi=e^\frac{{2\pi i}}{9}$ which is the primitive $9$th root of unity.
\end{definition}
\noindent For qutrits, another universal gate set referred to as the  Clifford$+R$ gate set \cite{metaplectic} is defined as follows:
\begin{definition}
The single qutrit Clifford$+R$ gate set is generated by:
\end{definition}
\[
H=\frac{1}{\sqrt{-3}}
\begin{bmatrix}
1 & 1 & 1\\
1 &  \omega & \omega^{2}\\
1 & \omega^{2} & \omega
\end{bmatrix}~~~
S=\begin{bmatrix}
1 & 0 & 0\\
0 & \omega & 0\\
0 & 0 & 1
\end{bmatrix}~~~~
R=\begin{bmatrix}
1 & 0 & 0\\
0 & 1 & 0\\
0 & 0 & -1
\end{bmatrix}
\]
The $R$ gate is sometimes referred to as the metaplectic gate.
As shown in \cite{metaplectic}, the  multi-qutrit Clifford$+R$ gate set is a subset of multi-qutrit Clifford$+T$ gate set.
We define the Clifford$+\mathcal{D}$ gate set which can be thought of as the single qutrit analogue of the Clifford Cyclotomic gate set discussed in \cite{Forest_2015}:
\begin{definition}
The single qutrit Clifford$+\mathcal{D}$ gate set is generated by the following gates:
\[
H=\frac{1}{\sqrt{-3}}
\begin{bmatrix}
1 & 1 & 1\\
1 &  \omega & \omega^{2}\\
1 & \omega^{2} & \omega
\end{bmatrix}~~~
D_{[a,b,c]}=\begin{bmatrix}
\xi^{a} & 0 & 0\\
0 & \xi^{b} & 0\\
0 & 0 & \xi^{c}
\end{bmatrix}~~~
R_{[a,b,c]}=
\begin{bmatrix}
(-1)^a & 0 & 0\\
0 & (-1)^b & 0\\
0 & 0 & (-1)^c
\end{bmatrix}
\]

where $\xi=e^\frac{2\pi i}{9}.$ Here we denote all gates of the form $\text{diag}(\pm \xi^a, \pm \xi^b, \pm \xi^c)$ as $\mathcal{D}.$ Often we drop the subscript $[a,b,c]$ from $D_{[a,b,c]}$ and simply write $D$. We denote $R := R_{[0,0,1]}$ as it is frequently used. 
\end{definition}
\noindent Note that a $D$ gate is a diagonal operator in the Clifford Hierarchy \cite{Cui_2017}.
More generally, one can define the qudit version of the Clifford$+T$ gate set for odd prime dimensions as follows:
\begin{definition}
The single qudit Clifford$+T$ group is generated by the following:
\[
H=\frac{1}{\sqrt{p}}\sum_{i=0}^{p-1}\sum_{j=0}^{p-1}{\zeta_p}^{ij}\ket{i}\bra{j}~~~S=\sum_{j=0}^{p-1}{\zeta_p}^{j(j+1)2^{-1}}\ket{j}\bra{j}~~~
T= \sum_{j=0}^{p-1}\zeta_p^{j^{3}6^{-1}}\ket{j}\bra{j}
\]
\end{definition}
\noindent Note that for qudits of odd prime dimension $p>3$, the $T$ gate requires only the powers of $\zeta_{p}$ and no higher roots of unity as explained in \cite{PhysRevA.86.022316}.
It is well known that these single qutrit gate sets are universal for quantum computation when supplemented with an appropriately chosen two qutrit entangling gate.

\noindent We now recall the definition of certain rings which are used later in the paper and are closely connected to exact synthesis. Note that the ring $R$ ``localized" at $\chi$ is denoted as $R_{\chi}$:

\begin{definition}
Let $p$ be a prime, $l\geq 1$ an integer, denote $n=p^l,$ and $R_n := \mathbb{Z}[\zeta_{n}]$ be the ring of cyclotomic integers, i.e., $\zeta_{n}$ is a primitive $n$-th root of unity and $R_{n,\chi} := \{\frac{a}{\chi^f} : a \in R_n\},$ where $\chi=1-\zeta_{p^{n}}$ i.e., it is the ring obtained by allowing powers of $\chi$ in the denominators. We denote by $U(p,R_{n,\chi})$ the group of $p \times p$ unitary matrices whose entries are in $R_{n, \chi}.$ We will suppress the subscript $n$ for both $R_n$ and $R_{n,\chi}$ when it is clear from the context. We will also write $\zeta$ in place of $\zeta_n,$ when $n$ is clear from the context.
\end{definition}

\noindent It is easy to see that $U(p,R_{\chi})$ is a group since the inverse of a unitary is its conjugate transpose and that $\overline{\chi} = 1 - \zeta^{-1} = -\zeta^{-1} \chi$. 

\begin{remark}~
\begin{enumerate}
\item When $n = 2^3$, the ring $R_{\chi}$ is precisely $\mathbb{Z}[e^{\frac{2\pi i}{8}},1/\sqrt{2}] = \mathbb{Z}[i,1/\sqrt{2}]$ the ring in \cite{kmm}. When $n = 3$ the ring is the one considered in \cite{vadymanyons}.
\item For arbitrary $n = p^l$, the gates Clifford$+\mathcal{D}$ (which includes the T gate) are contained in $U(p,R_\chi).$
\end{enumerate}
\end{remark}
\noindent More explicitly, we can describe the rings $R_{n}$ and $R_{n,\chi}$ in the cases where $n = 3, 8$ and $9$ as follows:
\begin{itemize}
\item $R_{8} =\{\sum_{i=0}^{i=3}a_{i}\zeta_{8}^{i}~|~a_{i}\in \bZ,~\zeta_{8}=e^\frac{{2\pi i}}{8}\}$
\item $R_{8,\chi} =\{\frac{a}{\chi^{f}}~|~a\in R_{8},~f\in \bZ
_{\geq 0}\}$ 
\item$R_{9}=\bZ[\xi]:=\{\sum_{i=0}^{5}a_{i}\xi^{i}~|~ a_{i}\in \bZ,~\xi=e^\frac{{2\pi i}}{9}\}$
\item $R_{9,\chi}=\{\frac{a}{\chi^{f}}~|~a\in R_{9},~f\in \bZ_{\geq 0}\}$ 
\item $R_{3}=\{a_{0}+a_{1}\omega~|~a_{i}\in \bZ,~\omega=e^\frac{{2\pi i}}{3}\}$
\item $R_{3,\chi}=\{\frac{a}{\chi^{f}}~|~a\in R_{3},~f\in \bZ_{\geq 0}\}$ 
\end{itemize}
In general, for qudits we will have the ring $R_{\chi}=\{\frac{a}{\chi^{f}}~|~a\in R,~f\geq 0\}$ where $R=\bZ[\zeta]$ and $\zeta$ is the primitive $n$th root of unity for some $n=p^{l}$.

\noindent We will now define two key notions which are in fact the main tool to study exact synthesis:
\begin{definition}
The smallest denominator exponent (sde) of $z \in R_{\chi}$ with respect to $\chi$ is defined as the smallest non-negative integer $f$ such that $\chi^{f}z\in R$.
\end{definition}
\begin{definition}
The greatest dividing exponent (gde) of an element $f(\zeta) \in \mathbb{Z}[\zeta]$ with respect to $\chi$ is defined as the largest integer $k$ such that $\chi^k$ divides $f(\zeta)$.
\end{definition}

\noindent The notions of sde and gde are closely related. If $u = \frac{a}{\chi^f},$ for $a \in \mathbb{Z}[\xi],$ then we have $$sde(u) = \max\{0, f - gde(a)\}.$$

\begin{example}
Let $g = \frac{f(\xi)}{3},$ where $f(\xi) = 1 + \xi + \xi^2 \in \mathbb{Z}[\xi].$ We see that $f(\xi) = 1 + (1-\chi) + (1-\chi)^2 = 3 - \chi - 2\chi + \chi^2 = u\chi^6 -u\chi^7 + \chi^2 = \chi^2(1 + u\chi^4 - u\chi^5).$ Therefore, $\text{gde}(f(\xi),\chi) = 2$ and $\text{sde}(g,\chi) = 4.$

\end{example}

\begin{example}
    Let $n=3$, i.e., $\omega = e^{\frac{2\pi i}{3}}$. Since $\omega$ satisfies $\omega^2 + \omega + 1 = 0,$ an arbitrary element of $\mathbb{Z}[\omega]$ can be written as $\alpha = a+b \omega,$ where $a,b \in \mathbb{Z}$. Note that $3 = \chi^2 u,$ for some unit $u \in \mathbb{Z}[\omega].$ Therefore,  $\chi$ divides $\alpha$ iff $a+b = 0 \pmod{3}.$

\noindent If $a = 1 \pmod{3}$ and $b = -1 \pmod{3}$ then $\chi^{2}$ does not divide $\alpha$. Indeed, if $\chi^2$ divides $\alpha$ then both $a$ and $b$ are divisible by $3$.  Therefore, $gde(\alpha, \chi) = 1.$
\end{example}

\noindent We briefly discuss a tool to compute the sde of an arbitrary element of $R_{n,\chi}$ for any $n = p^l$, to be fully described in the next section. Consider the  map $$P : \mathbb{Z}[\zeta] \xrightarrow[]{f(\zeta) \mapsto f(1)} \mathbb{Z}/p$$

\noindent The map is a well-defined ring homomorphism with $\ker(P) = \chi$. Indeed, if $f(\zeta) = g(\zeta),$ where $f,g \in \mathbb{Z}[x],$ then $f(x) \equiv g(x) \pmod{\Phi_{n}(x)}.$ It is well known that $\Phi_{n}(1) = 0 \pmod{p}$ whenever $n$ is a power of $p.$ Therefore, $f(1) \equiv g(1) \pmod{p}.$  Note that $f(1) = 0 \pmod{p}$ iff $\chi$ divides $f(\zeta).$ This essentially follows from the remainder theorem, i.e., $f(\zeta) = f(1) + \chi g(\zeta),$ for some $g(\zeta) \in \mathbb{Z}[\zeta]$ and the fact that $p = u\chi^{\phi(n)} = u \chi^{p^{l-1}(p-1)},$ for some unit $u$ in $\mathbb{Z}[\zeta].$ Note that the map $P$ already captures the divisibility of an element in $\mathbb{Z}[\zeta]$ by $\chi.$ In the next section, we introduce higher analogues of $P$ which capture the divisibility of an element in $\mathbb{Z}[\zeta]$ by $\chi^k.$

\section{Derivatives mod $p$}
\label{derivative}
In this section, we shall make sense of differentiation of elements of the ring of cyclotomic integers $\mathbb{Z}[\zeta]$, modulo a prime $p$, where $\zeta$ is a primitive $n=p^l$-th root of unity, for $l \geq 1$. 
The morphism $\mathbb{Z}[x] \xrightarrow{} \mathbb{Z}[\zeta]$ defined by $f(x) \mapsto f(\zeta)$ is a ring homomorphism whose kernel is generated by $\Phi_{n}(x),$ the $n$-th cyclotomic polynomial. In other words, $f(\zeta) = g(\zeta)$ iff $f(x) \equiv g(x) \pmod{\Phi_{n}(x)}.$ The polynomial $\Phi_{n}(x)$ is monic and irreducible in the ring $\mathbb{Z}[x]$ and has degree $\phi(n) = p^{l-1}(p-1),$ the Euler totient of $n$. More explicitly, $$\Phi_n(x) = \prod_{k:\text{ }gcd(k,n)=1} (x-\zeta^{k})$$

\noindent It is well known that: $$x^n - 1 = \prod_{d|n}\Phi_d(x).$$

\noindent By applying the above identity for $n = p \text{ and } p^2,$  we have
$ x^{p} - 1 = \Phi_1(x) \Phi_p(x),$
$x^{p^2} - 1 = \Phi_1(x)\Phi_p(x)\Phi_{p^2}(x)$
where $\Phi_{1}(x) = x-1.$ These imply $\Phi_{p}(x) = x^{p-1} + \ldots + 1$ and $\Phi_{p^2}(x) = (x^{p})^{p-1} + (x^{p})^{p-2} \ldots + x^{p} + 1$ respectively. In particular, when $p=3,$ $\Phi_3(x) = x^2 + x + 1$ and $\Phi_9(x) = x^{6} + x^3 + 1.$ Similarly $\Phi_{8}(x) = x^4 + 1.$ More generally, we have $$\Phi_{p^{l}}(x) = \frac{x^{p^l} - 1}{x^{p^{l-1}}-1} = \sum_{j=0}^{p-1} x^{jp^{l-1}}$$

\begin{remark}
    By substituting $x=1$ in the above identity we obtain $$p = \prod_{k:\text{ } \gcd(k,n)=1} (1-\zeta^{k}) = (1-\zeta)^{\phi(n)} \prod_{k:\text{ } \gcd(k,n)=1} \frac{1-\zeta^{k}}{1-\zeta} $$
    For integers $k$ such that $gcd(k,n) = 1,$ $\frac{1-\zeta^{k}}{1-\zeta}$ is in fact a unit. Therefore, we may write $p = (1-\zeta)^{\phi(n)} u$ for some unit $u$ in $\mathbb{Z}[\zeta]$. 
    \end{remark}

\noindent The following lemma follows easily from the above identity. Although, we only need this lemma for $n = 3,8,9$ and $n = p$ for prime $p \geq 5.$
\begin{lemma}   \label{phider}
For $0 \leq k < \phi(p^{l}) = p^{l-1}(p-1),$ 
$$\frac{\Phi_{p^l}^{(k)}(1)}{k!}  \equiv 0 \pmod{p}$$
\end{lemma}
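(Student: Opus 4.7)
The plan is to prove the stronger mod $p$ congruence
\[ \Phi_{p^l}(x) \equiv (x-1)^{\phi(p^l)} \pmod{p}, \]
from which the lemma is immediate: if every term of degree $< \phi(p^l)$ in the Taylor expansion of $\Phi_{p^l}(x)$ about $x=1$ vanishes mod $p$, then $\Phi_{p^l}^{(k)}(1)/k!$, being exactly the coefficient of $(x-1)^k$ in that expansion, is divisible by $p$ for all $0 \le k < \phi(p^l)$.

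First I would use the cyclotomic tower factorization already recalled in the excerpt,
\[ \Phi_{p^l}(x) \;=\; \frac{x^{p^l}-1}{x^{p^{l-1}}-1} \;=\; \Phi_p\!\left(x^{p^{l-1}}\right), \]
and change variables by setting $y := x-1$ and $u := x^{p^{l-1}}$, $v := u-1$. Iterating the Freshman's Dream $(1+y)^p \equiv 1+y^p \pmod p$ a total of $l-1$ times gives
\[ u \;=\; (1+y)^{p^{l-1}} \;\equiv\; 1 + y^{p^{l-1}} \pmod{p}, \qquad \text{so} \qquad v \equiv y^{p^{l-1}} \pmod{p}. \]

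Next I would expand $\Phi_p(u) = \bigl((1+v)^p - 1\bigr)/v$ via the binomial theorem, obtaining
\[ \Phi_p(u) \;=\; \sum_{j=1}^{p} \binom{p}{j} v^{j-1} \;=\; p \;+\; \binom{p}{2} v \;+\; \cdots \;+\; \binom{p}{p-1} v^{p-2} \;+\; v^{p-1}. \]
Since $p \mid \binom{p}{j}$ for $1 \le j \le p-1$, only the top term survives mod $p$, giving $\Phi_p(u) \equiv v^{p-1} \pmod{p}$. Combining this with $v \equiv y^{p^{l-1}} \pmod p$ yields $\Phi_{p^l}(x) \equiv y^{p^{l-1}(p-1)} = (x-1)^{\phi(p^l)} \pmod{p}$, establishing the displayed congruence.

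The only conceptual step is recognizing that the entire statement is packaged into the single identity $\Phi_{p^l}(x) \equiv (x-1)^{\phi(p^l)} \pmod p$; once that is noticed, the proof uses only the cyclotomic tower factorization and two applications of the divisibility $p \mid \binom{p}{j}$ for $0 < j < p$ (once inside the Freshman's Dream, once in the binomial expansion of $(1+v)^p$). I do not anticipate any serious obstacle.
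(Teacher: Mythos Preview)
Your proof is correct and follows essentially the same route as the paper's: both set the auxiliary variable equal to $x^{p^{l-1}}-1$ (your $v$, the paper's $y$), expand $\Phi_{p^l}(x)=\bigl((1+v)^p-1\bigr)/v$ binomially, and use $p\mid\binom{p}{j}$ for $1\le j\le p-1$ to reduce to $v^{p-1}$ mod $p$. The only difference is that you spell out explicitly the Freshman's Dream step $v\equiv (x-1)^{p^{l-1}}\pmod p$ needed to finish, whereas the paper leaves this implicit in the phrase ``the lemma follows.''
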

\begin{proof}
If $y = x^{p^{l-1}}-1$ then the above identity can be written as $$\Phi_{p^{l}}(x) = \frac{(y+1)^{p} - 1}{y} = y^{p-1} + {p\choose p-1}y^{p-2} + \ldots + {p \choose 1}$$
\noindent Now the lemma follows from the fact that $p$ divides ${p \choose j}$ for $1 \leq j \leq p-1.$
\end{proof}

\begin{example}
   We may verify the above lemma for $n=9$ by explicit computation. Using $\Phi_{9}(x) = x^6 + x^3 + 1,$ we may compute 
$\frac{\Phi_{9}^{(1)}(1)}{1!} = 6x^5 + 3x^{2}|_{x=1} = 9,$
$\frac{\Phi_{9}^{(2)}(1)}{2!} = 30x^4 + 6x|_{x=1} = 36,$
$\frac{\Phi_{9}^{(3)}(1)}{3!} = \frac{1}{3!}120x^3 + 6|_{x=1} = 21,$
$\frac{\Phi_{9}^{(4)}(1)}{4!} = \frac{1}{4!}360x^2|_{x=1} = 360/24 = 15,$  $\frac{\Phi_{9}^{(5)}(1)}{5!} = \frac{1}{5!}720x|_{x=1} = 6,$
and $\frac{\Phi_{9}^{(6)}(1)}{6!} = 720/6! = 1.$ Observe that $\frac{\Phi_{9}^{(k)}(1)}{k!} \equiv 0 \pmod{3}$ for $0 \leq k \leq 5.$ 
\end{example}

\begin{proposition}
    Let $f(x), g(x) \in \mathbb{Z}[x]$ such that $f(\zeta) = g(\zeta)$, where $\zeta$ is an $n$-th root of unity, $n = p^{l},l \geq 1$. For every $0 \leq k < \phi(n) = p^{l-1}(p-1)$ 
    $$\frac{f^{(k)}(1)}{k!} = \frac{g^{(k)}(1)}{k!} \pmod{p}$$
\end{proposition}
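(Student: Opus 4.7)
The plan is to reduce the statement to a computation involving the Leibniz rule and Lemma \ref{phider}. Since $f(\zeta)=g(\zeta)$ and the kernel of the map $\mathbb{Z}[x]\to\mathbb{Z}[\zeta],\ x\mapsto \zeta$ is the principal ideal generated by $\Phi_{n}(x)$, I can write
\[
f(x)-g(x) \;=\; \Phi_{n}(x)\,h(x)
\]
for some $h(x)\in\mathbb{Z}[x]$. The statement we want is then equivalent to showing that the $k$-th Taylor coefficient at $1$ of $\Phi_{n}h$ is divisible by $p$ for every $0\le k<\phi(n)$.

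Next I would apply the general Leibniz rule, written in the normalized ``Taylor coefficient'' form. For any two polynomials $u,v$,
\[
\frac{(uv)^{(k)}(1)}{k!} \;=\; \sum_{j=0}^{k}\frac{u^{(j)}(1)}{j!}\cdot\frac{v^{(k-j)}(1)}{(k-j)!}.
\]
Applying this with $u=\Phi_{n}$ and $v=h$, I obtain
\[
\frac{(f-g)^{(k)}(1)}{k!} \;=\; \sum_{j=0}^{k}\frac{\Phi_{n}^{(j)}(1)}{j!}\cdot\frac{h^{(k-j)}(1)}{(k-j)!}.
\]
Since $h\in\mathbb{Z}[x]$, each number $\frac{h^{(k-j)}(1)}{(k-j)!}$ is an integer (it is the coefficient of $(x-1)^{k-j}$ in the expansion of $h$ about $x=1$, and the shift $x\mapsto x+1$ preserves integrality of coefficients).

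The hypothesis $k<\phi(n)$ then does all the work: every index $j$ appearing in the sum satisfies $0\le j\le k<\phi(n)$, so by Lemma \ref{phider} each factor $\frac{\Phi_{n}^{(j)}(1)}{j!}$ is $\equiv 0\pmod{p}$. Consequently the whole sum vanishes mod $p$, giving
\[
\frac{f^{(k)}(1)}{k!} \;\equiv\; \frac{g^{(k)}(1)}{k!} \pmod{p},
\]
as desired. I do not foresee a real obstacle; the only point to be careful about is recording why $\frac{h^{(k-j)}(1)}{(k-j)!}$ is an integer, which ensures that the $p$-divisibility of each $\Phi_{n}$-factor propagates to the product, and why the range of $j$ in the Leibniz sum stays strictly below $\phi(n)$, which is exactly the hypothesis $k<\phi(n)$.
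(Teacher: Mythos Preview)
Your proposal is correct and follows essentially the same approach as the paper: write $f-g=h\,\Phi_n$, apply the Leibniz rule to obtain $\frac{(f-g)^{(k)}(1)}{k!}=\sum_{j=0}^{k}\frac{\Phi_n^{(j)}(1)}{j!}\cdot\frac{h^{(k-j)}(1)}{(k-j)!}$, and invoke Lemma~\ref{phider} for each $0\le j\le k<\phi(n)$. Your added remark that $\frac{h^{(k-j)}(1)}{(k-j)!}\in\mathbb{Z}$ is a nice clarification the paper leaves implicit.
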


\begin{proof}
    As we noted before, $f(\zeta) = g(\zeta)$ if and only if $f(x) = g(x) \pmod{\Phi_{n}(x)},$ i,e., $f(x) = g(x) + h(x)\Phi_{n}(x)$ for some $h(x) \in \Phi_{n}(x).$ By differentiating on both sides $k$ times and dividing by $k!$ we get 
    $$\frac{f^{(k)}(x)}{k!} = \frac{g^{(k)}(x)}{k!} + \sum_{r=0}^{k}  \frac{h^{(r)}(x)}{r!}\frac{\Phi_n^{(k-r)}(x)}{(k-r)!}$$
    By substituting $x=1$, the proposition follows from Lemma \ref{phider}.  
\end{proof}

\begin{definition}
    Let $\zeta = \zeta_{p^l}$ be a primitive $p^l$-th root of unity and $f(\zeta) \in \mathbb{Z}[\zeta]$. For $0 \leq k \leq \phi(p^l)$ we define the $k$-th derivative modulo $p$ as $\frac{f^{(k)}(1)}{k!}  \pmod{p}.$ 
\end{definition}

\noindent The above proposition makes the notion of `$k$-th derivative mod $p$' well-defined.

\noindent Let $q: \mathbb{Z}[\zeta] \xrightarrow{} \mathbb{Z}[\zeta]/(p)$ be the usual quotient map mod $p$. For $f(x)\in \mathbb{Z}[x],$  suppose $f(x) = \sum a_{i} (1-x)^{i}.$  We have $\frac{f^{(k)}(1)}{k!} = a_{k}$ for all $k.$ In particular, when $x = \zeta$ and $\chi := 1-\zeta,$ $f(\zeta) = \sum a_{i} \chi^{i}.$ Therefore, the `derivatives mod $p$' are precisely the Taylor coefficients of $f(x)$ modulo $p$ at $x=1$.
\begin{remark} \label{reduction}
    Note that since $\Phi_{n}(x),$ the minimal polynomial of $\zeta$, is monic of degree $\phi(n),$ we may always assume that the degree of $f(x)$ is less than $\phi(n)$. 
    Therefore, $\frac{f^{(k)}(1)}{k!} = 0 \pmod{p}$ for $k > \phi(n).$
\end{remark}

\noindent The following theorem relates these derivatives with the gde of elements of $\mathbb{Z}[\zeta]$, therefore giving a straight forward and efficient algorithm to compute the gde of any arbitrary element of $\mathbb{Z}[\zeta].$
\begin{theorem} \label{derivativecriterion}
    Let $n=p^{l},$ $l \geq 1$ and $1\leq k \leq \phi(n)$ for a prime $p$. Let $f(x)$ be an arbitrary polynomial with integer coefficients then 
    \begin{center}
        $\chi^{k}$ divides $f(\zeta)$ iff $f(1) = \frac{f^{(1)}(1)}{1!} = \ldots = \frac{f^{(k-1)}(1)}{(k-1)!} = 0 \mod p$
    \end{center}
\end{theorem}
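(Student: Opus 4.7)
The plan is to translate both sides of the biconditional into statements about the Taylor coefficients of $f$ at $x=1$. For any $f(x) \in \mathbb{Z}[x]$ of degree $d$, Taylor's theorem gives
\begin{equation*}
f(x) = \sum_{i=0}^{d} \frac{f^{(i)}(1)}{i!}(x-1)^{i}, \qquad \text{hence} \qquad f(\zeta) = \sum_{i=0}^{d} (-1)^{i} \frac{f^{(i)}(1)}{i!}\,\chi^{i}.
\end{equation*}
Two external ingredients will be used repeatedly. First, the identity $p = u\chi^{\phi(n)}$ for some unit $u \in \mathbb{Z}[\zeta]$ from the remark above, together with the hypothesis $k \le \phi(n)$, ensures that $\chi^{k}$ divides $p\,\chi^{i}$ for every $i \ge 0$. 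Second, the ring homomorphism $P : \mathbb{Z}[\zeta] \to \mathbb{Z}/p$ introduced in the preliminaries satisfies $\ker P = (\chi)$, so if an integer $a$ lies in the ideal $(\chi)$ of $\mathbb{Z}[\zeta]$, then $a \equiv 0 \pmod{p}$.

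For the implication $(\Leftarrow)$, I would assume $\frac{f^{(i)}(1)}{i!} \equiv 0 \pmod{p}$ for $0 \le i \le k-1$ and write each such coefficient as $p\,b_{i}$ with $b_{i} \in \mathbb{Z}$. Splitting the expansion above at index $k$,
\begin{equation*}
f(\zeta) = p\sum_{i=0}^{k-1}(-1)^{i} b_{i}\,\chi^{i} + \sum_{i=k}^{d}(-1)^{i}\frac{f^{(i)}(1)}{i!}\,\chi^{i},
\end{equation*}
the second sum is visibly in $(\chi^{k})$, and every term of the first is in $(\chi^{k})$ by the first ingredient. Hence $\chi^{k} \mid f(\zeta)$.

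For the implication $(\Rightarrow)$, I would argue by induction on $k$. The base case $k=1$ is precisely the statement that $\ker P = (\chi)$: if $\chi \mid f(\zeta)$, then $f(1) \equiv 0 \pmod{p}$. For the inductive step, assume the claim for $k-1$ and suppose $\chi^{k} \mid f(\zeta)$; then $\chi^{k-1} \mid f(\zeta)$, so the inductive hypothesis yields $\frac{f^{(i)}(1)}{i!} = p\,b_{i}$ for $0 \le i \le k-2$. Substituting into the Taylor expansion and isolating the single term of degree $k-1$, the same divisibility argument as in the forward direction shows that $\chi^{k}$ divides every contribution except possibly $(-1)^{k-1}\frac{f^{(k-1)}(1)}{(k-1)!}\chi^{k-1}$. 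Consequently $\chi$ divides the integer $\frac{f^{(k-1)}(1)}{(k-1)!}$, and the second ingredient forces $\frac{f^{(k-1)}(1)}{(k-1)!} \equiv 0 \pmod{p}$, closing the induction. The only delicate point is that the first ingredient must be invoked with $\phi(n) \ge k$, which is exactly the hypothesis on $k$; beyond this I do not anticipate any substantial obstacle.
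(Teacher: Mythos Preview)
Your proposal is correct and follows essentially the same approach as the paper: expand $f(\zeta)$ in powers of $\chi$ via Taylor's formula, then use $p = u\chi^{\phi(n)}$ together with the hypothesis $k \le \phi(n)$ to pass between ``$\chi$ divides an integer coefficient'' and ``$p$ divides that coefficient''. The paper phrases the $(\Rightarrow)$ direction as an iterative peeling-off of coefficients (using the norm $N(\chi)=p$) rather than an explicit induction, and it first reduces to $\deg f < \phi(n)$, but neither of these is a substantive difference from what you wrote.
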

\begin{proof}
     We may write $$f(\zeta) = f(1-\chi) = \sum_{i=0}^{d} a_i {(-\chi)}^{i} $$
We may assume $d < \phi(n)$ by Remark \ref{reduction}.  Suppose $\chi^k$ divides $f(\zeta)$ for $k \geq 1$ then it is immediate that $\chi$ divides $a_0$. Since $a_0$ is an integer, the norm $N(\chi) = p$ divides $a_0.$ However, $p = u\chi^{\phi(n)}$ for some unit $u$ in $\mathbb{Z}[\zeta].$ In particular, this implies $\chi$ divides $a_1$ which again implies $N(\chi) = p$ divides $a_1$. Carrying on iteratively, we may obtain $a_0 = \ldots = a_{k-1} = 0 \pmod{p}$. Conversely, if $a_0 = \ldots = a_{k-1} = 0 \pmod{3}$ then $\chi^k$ obviously divides $f(\xi)$ by the above Taylor expansion for $f$. This concludes the proof of the theorem. 
\end{proof}

\section{Unit vectors and unitary matrices with entries in $R_{\chi} = \{\frac{a}{\chi^{f}} : a \in \mathbb{Z}[\zeta]\}$}
\label{unit vectors}

In this section, we characterize qutrit unit vectors of a given sde $f$ as integer vectors $a := (a_0,\ldots, a_5), b := (b_0, \ldots, b_5),c := (c_0, \ldots, c_5) \in \mathbb{Z}^{6}$ satisfying certain integral quadratic forms. This yields us an algorithm to give a list of all unit vectors of sde $f$ whose complexity is equivalent to the complexity of finding $a,b,c \in \mathbb{Z}^6$ such that $$q(a) + q(b) + q(c) = N$$ 

\noindent where $N = O(3^{{\frac{f}{3}}})$ and $q(a) := \frac{1}{4} ((2a_{0} - a_{3})^2 + 3a_{3}^2 + (2a_{1} - a_{4})^2 + 3a_{4}^2 + (2a_{2} - a_{5})^2 + 3a_{5}^2)$.

\medskip

\noindent Let $R = \mathbb{Z}[\xi], \xi = e^{\frac{2 \pi i}{9}},\chi = 1-\xi$ and $R_{\chi} := \{\frac{a}{\chi^{f}} : a \in R\}.$ In this section, we study qutrit unit vectors
with entries in $R_{\chi},$ i.e., $z = (z_{1},z_{2},z_{3}) \in R_{\chi}^{3}$ such that $|z_1|^2 + |z_2|^2 + |z_3|^2 = 1$. Note that we may write $z = \frac{1}{\chi^{f}}(w_1,w_2,w_3),$ where $\chi \nmid w_{i},$ for some some $i$ and an integer $f \geq 0$. The integer $f$ is the smallest denominator exponent (sde) of $z$.

\medskip

\noindent Recall that the map $P : \mathbb{Z}[\xi] \xrightarrow[]{} \mathbb{Z}_3,$ defined by $P(f(\xi)) = f(1) \pmod{3}$ is well defined. The unit vector condition can be rewritten as 
\begin{align} \label{unitvector}
    |w_{1}|^2 + |w_{2}|^2 + |w_{3}|^2 = |\chi|^{2f}
\end{align}
If $f \geq 1,$ by applying the map $P$ to this equation gives $\sum_{i=1}^3 w_{i}(1)^2 = 0.$ This implies  that $w_{i}(1) = \pm 1,$ for all $i$, in other words $\chi \nmid w_{i}$ for all $i.$

\medskip

\noindent Let $\tau := \xi + \xi^{-1}.$ Note that the subring $\mathbb{Z}[\tau] \subset \mathbb{Z}[\xi]$ is precisely the set of real elements of $\mathbb{Z}[\xi]$. The minimal polynomial of $\tau$ is $x^{3} - 3x + 1.$ Therefore $1, \tau, \tau^{2}$ form a basis of $\mathbb{Z}[\tau]$ over $\mathbb{Z}.$ Since $|w_{i}|^{2}$ are real numbers, we may express them as an integral linear combination of $1, \tau, \tau^2.$ This allows us to reduce
Equation \ref{unitvector} to a system of equations of integral quadratic forms by writing $|w_{i}|^2$ on the LHS and $|\chi|^2$ on the RHS with respect to the basis  $1, \tau, \tau^{2}$. First, an element $f(\xi)$ of $\mathbb{Z}[\xi]$ can be written uniquely as $f(\xi) = \sum_{i=0}^{5} a_{i} \xi^{i}$, where $a_{i} \in \mathbb{Z}.$ We have 
\begin{align}
 |f(\xi)|^2 & = \sum_{0 \leq i,j \leq 5} a_{i} a_{j} \xi^{i} \xi^{-j}\\&   = \sum_{i=0}^{5}  a_{i}^{2} + \sum_{1 \leq i < j \leq 5}    a_{i}a_{j} (\xi^{i-j} + \xi^{j-i}) \\ &= 
 \sum_{i=0}^{5} a_{i}^2 +  \sum_{k=1}^{5} (\sum_{i < j, |i-j| = k} a_{i}a_{j}) (\xi^{k} + \xi^{-k})
\end{align}

\noindent We denote  $\alpha_{k} = \sum _{i \leq j, |i-j| = k} a_{i} a_{j}.$ Therefore, $|f(\xi)|^2 = \alpha_{0} + \sum_{k=1}^5 \alpha_k \tau_{k}.$

\noindent Recall that $\xi^{9} = 1$ and the minimal polynomial of $\xi$ is the cyclotomic polynomial $\Phi_{9}(x) = x^6 + x^3 + 1.$ Denote $\tau_{k} = \xi^{k} + \xi^{-k},$ for $0 \leq k \leq 5,$ and $\tau := \tau_{1}$. It is straightforward to write $\tau_{k}$ in terms of $\tau$ as follows 
\begin{align}
    \tau_{2} &= \tau^2 - 2 \\
    \tau_{3} &= \xi^{3} + \xi^{-3} = -1 \\
    \tau_{4} &= -\tau^2 - \tau + 2 \\
    \tau_{5} &= \tau_{4} = -\tau^2 - \tau + 2
\end{align}

\noindent The expression for $\tau_4$ is derived as $ \tau^{4} = \xi^{4} + 4 \xi^{3} \xi^{-1} + 6 \xi^{2} \xi^{-2} + 4 \xi^{1} \xi^{-3} + \xi^{-4} = \ \tau_{4} + 4\tau_2 + 6 = 3\tau^2 - \tau \implies \tau_{4} = 3\tau^{2} - \tau -4\tau_{2} - 6 = 3\tau^2 - \tau-4(\tau^2 - 2) - 6  = -\tau^2 - \tau + 2.$ Therefore, we find 
\begin{align}
    |f(\xi)|^2 &= \alpha_0 + \alpha_1 \tau + \alpha_2 (\tau^2 - 2) + \alpha_3 (-1) + \alpha_4 (-\tau^2 - \tau + 2) + \alpha_5 (-\tau^2 -\tau + 2) \\
    &= \underbrace{(\alpha_0 -2\alpha_2 - \alpha_3+ 2\alpha_4 + 2\alpha_5)}_{q_{0}} + \underbrace{(\alpha_1 - \alpha_4 - \alpha_5)}_{q_{1}}\tau + \underbrace{(\alpha_2 - \alpha_4 - \alpha_5)}_{q_2}\tau^2
\end{align}

\noindent We may view $q_{i} = q_{i}(a_0, \ldots, a_5)$ (also written as $q_{i}(a)$) as quadratic forms in the six variables $a = (a_0, \ldots, a_5)$. In particular, if $w_{1}(\xi) = \sum_{j=0}^{5} a_{j} \xi^j,$ $w_{2}(\xi) = \sum_{j=0}^5 b_j \xi^{j}$ and $w_3 = \sum_{j=0}^{5} c_{j} \xi^j$ then $|w_{1}|^2 = q_{0}(a) + q_{1}(a) \tau + q_{2}(a)\tau^2, |w_{2}|^2 = q_{0}(b) + q_{1}(b) \tau + q_{2}(b)\tau^2$ and $|w_{3}|^2 = q_{0}(c) + q_{1}(c) \tau + q_{2}(c)\tau^2.$ Therefore, the unit vector condition implies
$$(q_{0}(a) + q_{0}(b) + q_{0}(c)) + (q_{1}(a) + q_{1}(b) + q_{1}(c)) \tau + (q_{2}(a) + q_{2}(b) + q_{2}(c)) \tau^2 = |\chi|^{2f} = A + B\tau + C\tau^2,$$
for some integers $A,B$ and $C$. By the linear independence of $\{1, \tau, \tau^2\}$ in $\mathbb{Z}[\tau]$ over $\mathbb{Z},$ we have 
\begin{align*}
    q_{0}(a) + q_{0}(b) + q_{0}(c) = A\\
    q_{1}(a) + q_{1}(b) + q_{1}(c) = B\\
    q_{2}(a) + q_{2}(b) + q_{2}(c) = C
\end{align*}

\noindent Note that given $f$, the integers $A,B$ and $C$ are easily computable by writing $|\chi|^2 = (1-\xi)(1-\xi^{-1}) = 2 - \tau,$  and therefore using the binomial theorem to expand $|\chi|^{2f} = (2-\tau)^f.$ Since $\tau$ satisfies its minimal polynomial $\tau^{3} - 3\tau + 1 = 0,$ we can write any $\tau^{k}$ for $k \geq 3$ as an integral linear combination of $1, \tau$ and $\tau^2.$ In the following examples, we fully describe $A,B$ and $C$ using a minor simplification.

\begin{example}~
    \begin{enumerate}
        \item For $f = 0,$ $(A,B,C) = (1,0,0).$ 
        \item For $f = 1,$ $(A,B,C) = (2,-1,0)$.
        \item For $f=2,$ $|\chi|^{2f} = (2-\tau)^2 = 4 - 4\tau + \tau^2,$ therefore $(A,B,C)  = (4,-4,1).$
        \item For $f=3,$ $\chi^{2f} = (2-\tau)^3 = 8 -12\tau + 6\tau^2 - \tau^3 = 9 - 15\tau + 6\tau^2.$ Therefore, $(A,B,C) = (9,-15,6)$. However, in this case we can simplify the above quadratic form equations by observing that $3 = \chi^{6}u$ for some unit $u$ in $\mathbb{Z}[\xi].$ Indeed, simply multiply $|u|^2$ on both sides of Equation \ref{unitvector}, which gives us $3$ on the RHS. Therefore, we may assume $(A,B,C) = (3,0,0)$. 
        \item In fact, by applying the reasoning in the above example we can simplify the RHS for any $f \geq 3$. Indeed, write $|\chi|^{2f} = 3^{r} (2-\tau)^s,$ where $r = \floor{\frac{f}{3}}$ and $s = f \pmod{3},$ the remainder of $f$ upon division by $3,$ therefore $0 \leq s \leq 2.$ In other words, we may always assume $(A,B,C) = 3^{r}(1,0,0)$ (when $s = 0$) or $3^{r}(2,-1,0)$ (when $s = 1$) or $3^r (4,-4,1)$ (when $s=2$).
    \end{enumerate}
\end{example}

We record our observations so far as follows.

\begin{lemma}
    Assume $f\geq 3$ and let $w_{1}(\xi) = \sum_{j=0}^{5} a_{j} \xi^j,$ $w_{2}(\xi) = \sum_{j=0}^5 b_j \xi^{j}$ and $w_3 = \sum_{j=0}^{5} c_{j} \xi^j$. If $z = \frac{1}{\chi^f} (w_1,w_2,w_3)$ is a unit vector then  $a,b,c \in \mathbb{Z}^{6}$ satisfy
    \begin{align*}
    q_{0}(a) + q_{0}(b) + q_{0}(c) = A\\
    q_{1}(a) + q_{1}(b) + q_{1}(c) = B\\
    q_{2}(a) + q_{2}(b) + q_{2}(c) = C
\end{align*}
for some integers $A,B$ and $C$ which are determined by $A + B\tau + C\tau^2 = 3^{r}(2-\tau)^{s},$ where $r = \floor{\frac{f}{3}}$ and $s =  f \pmod{3}.$ By the linear independence of $1, \tau, \tau^2$ over $\mathbb{Z}$, we can explicitly write down $A,B$ and $C$ for a given $f.$
\end{lemma}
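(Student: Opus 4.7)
The plan is to assemble the proof directly from the computations already performed in the section. Clearing the common denominator $\chi^f$ in $z = \chi^{-f}(w_1,w_2,w_3)$ reduces the unit vector condition to the identity
\[
|w_1|^2 + |w_2|^2 + |w_3|^2 = |\chi|^{2f},
\]
which lies in the real subring $\mathbb{Z}[\tau] \subset \mathbb{Z}[\xi]$. Since $\tau$ has irreducible cubic minimal polynomial $x^3 - 3x + 1$ over $\mathbb{Q}$, the set $\{1, \tau, \tau^2\}$ is a $\mathbb{Z}$-basis of $\mathbb{Z}[\tau]$, so the proof reduces to expanding both sides in this basis and equating coefficients.

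For the left-hand side I would invoke the explicit formula derived in the section: writing $w_i(\xi) = \sum_{j=0}^5 a_{i,j}\xi^j$, one expresses $|w_i|^2 = \sum_k \alpha_k(a_i)\tau_k$ with each $\alpha_k$ a quadratic form in the six coordinates of $a_i$, and then reduces each $\tau_k = \xi^k + \xi^{-k}$ to the basis $\{1, \tau, \tau^2\}$ using the relations $\tau_2 = \tau^2 - 2$, $\tau_3 = -1$, $\tau_4 = \tau_5 = -\tau^2 - \tau + 2$. This produces the stated quadratic forms $q_0, q_1, q_2$ and, summing over $i = 1, 2, 3$ with the coefficient vectors $(a, b, c)$, the left-hand side of the asserted system.

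For the right-hand side I would expand $|\chi|^{2f} = (2-\tau)^f$. Writing $f = 3r + s$ with $0 \leq s \leq 2$, the key identity is $(2-\tau)^3 = 3\cdot(3 - 5\tau + 2\tau^2)$; a short norm calculation using $N(2-\tau) = 3$ (obtained by evaluating the minimal polynomial at $2$) gives $N(3 - 5\tau + 2\tau^2) = 27/27 = 1$, so that $3 - 5\tau + 2\tau^2$ is a unit of $\mathbb{Z}[\tau]$. Replacing $(w_1, w_2, w_3)$ by $(uw_1, uw_2, uw_3)$ for a suitable unit $u \in \mathbb{Z}[\xi]$, produced from the basic relation $3 = \chi^6 u_0$ by taking complex moduli, preserves the sde of $z$ (units do not affect divisibility by $\chi$) and collapses the right-hand side to $3^r(2-\tau)^s$. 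A direct binomial expansion of $(2-\tau)^s$ for $s = 0, 1, 2$ then produces the stated values of $(A, B, C)$, and $\mathbb{Z}$-linear independence of $\{1, \tau, \tau^2\}$ allows coefficient comparison to yield the three equations of the lemma.

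The main technical obstacle is the unit normalization in the third step: one must confirm that the unit of $\mathbb{Z}[\tau]$ needed to reduce $(2-\tau)^3$ to $3$ can be realized as $|u|^2$ for a unit $u \in \mathbb{Z}[\xi]$, so that the substitution $w_i \mapsto uw_i$ preserves the shape of the left-hand side as a sum of norms. Once this is granted, the rest of the proof is routine bookkeeping from the already-completed calculations.
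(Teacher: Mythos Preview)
Your proposal is correct and mirrors the paper exactly: the lemma is stated there without a separate proof, introduced as ``We record our observations so far as follows,'' and those observations are precisely the steps you list. The technical obstacle you flag is real and the paper glosses over it (writing only ``simply multiply $|u|^2$ on both sides'' with $u$ the unit in $3 = u\chi^6$); it is resolved by noting that $\sqrt{-3} \in \mathbb{Z}[\xi]$ with $\sqrt{-3} = v\chi^3$ for a unit $v$ (indeed $\sqrt{-3} = \xi^3(1-\xi)(1+\xi+\xi^2)$ and $\mathrm{gde}(1+\xi+\xi^2,\chi)=2$), so that $3 = |\sqrt{-3}|^2 = |v|^2|\chi|^6$ and the required real unit $3/|\chi|^6 = (3-5\tau+2\tau^2)^{-1}$ is indeed of the form $|v|^2$.
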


\begin{theorem}
    For $p=3$, there exists finitely many (including zero) unit vectors with fixed sde $f$. The problem of listing all unit vectors with a given sde $f$ reduces to finding all the integer solutions of a positive-definite quadratic form in $18$ variables.
    
\end{theorem}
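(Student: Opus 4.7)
My plan is to build, from the three equations of the preceding lemma, a single positive-definite quadratic form on $\mathbb{Z}^{18}$ that evaluates to a fixed integer determined by $f$. Finiteness of the integer representations of such a form will then deliver both the finiteness of unit vectors and the algorithmic reduction.

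The setup recasts a unit vector $z = \chi^{-f}(w_1,w_2,w_3)$ as a triple $(a,b,c) \in \mathbb{Z}^{18}$ of coefficient vectors of $w_1, w_2, w_3$ in the basis $\{1,\xi,\ldots,\xi^5\}$, subject to
\[
q_0(a)+q_0(b)+q_0(c)=A,\quad q_1(a)+q_1(b)+q_1(c)=B,\quad q_2(a)+q_2(b)+q_2(c)=C.
\]
Since none of $q_0, q_1, q_2$ is individually positive-definite, I would not try to solve these three equations directly; instead I would extract a positive-definite $\mathbb{Z}$-linear combination by exploiting the three real embeddings of the totally real field $\mathbb{Q}(\tau) = \mathbb{Q}(\xi+\xi^{-1})$.

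The key observation is that $|w(\xi^k)|^2 = q_0(a) + q_1(a)\tau_k + q_2(a)\tau_k^2 \geq 0$ for each primitive $9$-th root $\xi^k$ with $k \in \{1,2,4\}$, where $\tau_k := \xi^k + \xi^{-k}$ are the three Galois conjugates of $\tau$. Since $\tau$ satisfies $x^3 - 3x + 1 = 0$, Newton's identities give $\tau_1+\tau_2+\tau_4 = 0$ and $\tau_1^2+\tau_2^2+\tau_4^2 = 6$, so summing the three non-negative quantities produces
\[
\sum_{k \in \{1,2,4\}} |w(\xi^k)|^2 \;=\; 3\,q_0(a) + 6\,q_2(a).
\]
Setting $Q(a) := q_0(a) + 2\,q_2(a)$, the left-hand side vanishes precisely when $w(x) \equiv 0 \pmod{\Phi_9(x)}$, which together with $\deg w \leq 5$ forces $w=0$. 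Hence $Q$ is a positive-definite integral quadratic form on $\mathbb{Z}^6$.

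Summing the first and twice the third of the original equations yields $Q(a)+Q(b)+Q(c) = A+2C$; by the explicit description of $(A,B,C)$ recorded in the Example before the lemma, the right-hand side is a nonnegative integer of size $O(3^{f/3})$. Because a positive-definite integral quadratic form represents each integer only finitely many times, this already gives the finiteness claim. The enumeration algorithm then lists all $(a,b,c) \in \mathbb{Z}^{18}$ with $Q(a)+Q(b)+Q(c)=A+2C$ and retains those additionally satisfying the two remaining quadratic constraints $q_0(a)+q_0(b)+q_0(c)=A$ and $q_1(a)+q_1(b)+q_1(c)=B$ (the $q_2$ relation is then automatic from $Q = q_0 + 2q_2$). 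I expect the main obstacle to be essentially bookkeeping: carefully verifying the trace identities and confirming that the chosen combination is both integral and positive-definite. The finiteness itself is really the classical positivity of the trace form on the order $\mathbb{Z}[\xi]$, pulled down to the real subring $\mathbb{Z}[\tau]$.
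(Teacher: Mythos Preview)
Your proposal is correct and lands on exactly the same positive-definite form $Q=q_0+2q_2$ and the same equation $Q(a)+Q(b)+Q(c)=A+2C$ as the paper. The difference is in how positive definiteness is established. The paper proceeds by direct algebra: it computes $q_0+2q_2=\alpha_0-\alpha_3=\sum_{i}a_i^2-(a_0a_3+a_1a_4+a_2a_5)$ and then completes the square to obtain the explicit diagonal expression
\[
q(a)=\tfrac14\bigl((2a_0-a_3)^2+3a_3^2+(2a_1-a_4)^2+3a_4^2+(2a_2-a_5)^2+3a_5^2\bigr),
\]
from which positivity is read off at sight. You instead recognize $3Q(a)$ as the sum of $|w(\xi^k)|^2$ over the three real embeddings $\tau\mapsto\tau_k$ of $\mathbb{Q}(\tau)$, using the power-sum identities for the roots of $x^3-3x+1$; positive definiteness then follows because vanishing of all $|w(\xi^k)|^2$ forces $\Phi_9\mid w$ and hence $w=0$. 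Your route is the trace-form argument in disguise and makes transparent \emph{why} the combination $q_0+2q_2$ is the right one, and it would port cleanly to other $p^l$. The paper's route, on the other hand, delivers the explicit diagonalization of $q$, which is what is actually used in the subsequent analysis (e.g.\ classifying the solutions of $q(a)=1$ to pin down the sde $0$ and sde $3$ unit vectors). So your argument is a clean substitute for the proof of the theorem itself, but if you continue with the paper's applications you would still need to write down the explicit quadratic form.
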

\begin{proof}  
First, observe that 
    \begin{align}
    q = q_{0} + 2q_{2}  = \alpha_0 - \alpha_3 
    &= \sum_{i=0}^{5} a_{i}^2 - a_{0}a_{3} - a_{1}a_{4} - a_{2}a_{5} \\
    & = (a_{0}^2 + a_{3}^2 - a_{0}a_{3}) + (a_{1}^2 + a_{4}^2 - a_{1}a_{4}) + (a_{2}^2 + a_{5}^2 - a_{2}a_{5}) \\
    & = \frac{1}{4} ((2a_{0} - a_{3})^2 + 3a_{3}^2 + (2a_{1} - a_{4})^2 + 3a_{4}^2 + (2a_{2} - a_{5})^2 + 3a_{5}^2)
\end{align}

\noindent which is a positive definite quadratic form.

\noindent By the above lemma, for a unit vector $z = \frac{1}{\chi^f} (w_1,w_2,w_3)$ of sde $f$, the three $6$-tuples $a,b,c \in \mathbb{Z}^6$ must satisfy $$q(a) + q(b) + q(c) = A + 2C$$  
\end{proof}

\begin{remark} \label{unitabs1}
    Suppose $a \in \mathbb{Z}^6$ such that $q(a) = 1$, then exactly one of the three integers $a_{0}^2 + a_{3}^2 - a_{0}a_{3}, a_{1}^2 + a_{4}^2 - a_{1}a_{4}$ or $a_{2}^2 + a_{5}^2 - a_{2}a_{5}$  is $= 1$ and the other two are $=0$. Assuming $a_{0}^2 + a_{3}^2 - a_{0}a_{3} = 1,$ we have $a = (\pm 1,0,0,0,0,0)$ or $ (0,0,0,\pm 1,0,0)$ or $(1, 0, 0, 1,0,0)$ or $(-1, 0, 0, -1,0,0).$ In other words, the corresponding element in $\mathbb{Z}[\xi]$ is $w_{1}(\xi) = \pm \xi^{3}$ or $1 + \xi^3$ or $-1 - \xi^3.$ Since $\xi$ satisfies its minimal polynomial $x^6 + x^3 + 1,$ we have $\xi^{6} = -(1+\xi^3).$ Therefore, $w_{1}(\xi) = \pm 1$ or $\pm \xi^3$ or $\pm \xi^6.$ Similarly when $a_{1}^2 + a_{4}^2 - a_{1}a_{4} = 1$ (resp. when $a_{2}^2 + a_{5}^2 - a_{2}a_{5} = 1$)  we have $w_{1}(\xi) = \pm \xi$ or $\pm \xi^4$ or $\pm \xi^7.$ (resp. $\pm \xi^2$ or $\pm \xi^5$ or $\pm \xi^8$). Therefore, all the possibilities for $w_{1}(\xi),$ namely $\pm \xi^{a},$ where $a \in \{0, \ldots, 8\}$ are in fact possible.  
\end{remark}

\begin{corollary}
    Let  $z = \frac{1}{\chi^f} (w_1,w_2,w_3) \in R_{\chi}^3$ be a unit vector such that $\chi \nmid w_{i}$ for all $i$, i.e, $sde(z,\chi) = f$.
    \begin{itemize}
        \item[(1.)] If $f = 0$ then $z = (\pm \xi^{a},0,0)$ up to permutation of coordinates, for some $a \in \mathbb{Z}.$ 
        \item[(2.)] If $f = 3$ then $z = \frac{1}{\chi^3}(\pm \xi^{a}, \pm \xi^{b}, \pm \xi^{c}).$
    \end{itemize}
\end{corollary}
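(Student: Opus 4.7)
The plan is to apply the preceding theorem together with Remark~\ref{unitabs1}, which characterizes integer $6$-tuples $a$ satisfying $q(a) = 1$ as precisely those for which $\sum_{i=0}^{5} a_i \xi^i = \pm \xi^{a}$ for some $a \in \{0,\ldots,8\}$. Positive-definiteness of $q$ forces $q(\cdot) \in \mathbb{Z}_{\geq 0}$, with $q(a) = 0$ exactly when the associated $w \in \mathbb{Z}[\xi]$ vanishes; this is the key dichotomy I will exploit to enumerate cases.

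For part (1), I specialize to $f = 0$: the preceding example gives $(A,B,C) = (1,0,0)$, so the reduced equation reads $q(a) + q(b) + q(c) = 1$. Non-negative integers summing to one force exactly one summand to equal $1$ and the other two to vanish. The vanishing $q$-values give two zero $w_i$'s, while Remark~\ref{unitabs1} identifies the remaining $w_i$ as $\pm \xi^{a}$. The companion equations $\sum q_1(\cdot) = 0$ and $\sum q_2(\cdot) = 0$ hold automatically since $|\pm \xi^{a}|^2 = 1 = 1 + 0\cdot\tau + 0\cdot\tau^2$. Relabeling coordinates yields $z = (\pm \xi^{a}, 0, 0)$ up to permutation.

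For part (2), I specialize to $f = 3$: using the example's simplification (multiplying Equation~\ref{unitvector} by $|u|^2$, where $u$ is the unit with $3 = \chi^{6} u$), the effective constants become $(A,B,C) = 3(1,0,0)$, giving $q(a) + q(b) + q(c) = 3$. The hypothesis $\chi \nmid w_i$ rules out $w_i = 0$, so $q(a_i) \geq 1$ for every $i$. Three non-negative integers each at least $1$ and summing to $3$ must each equal $1$, whence Remark~\ref{unitabs1} delivers $w_i = \pm \xi^{a_i}$ for each $i$. The companion equations are again satisfied automatically since each $|\pm \xi^{a_i}|^2 = 1$.

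The principal obstacle is justifying the example's simplification of $(A,B,C)$ in the $f = 3$ case and confirming that the resulting candidate solutions with each $q(a_i) = 1$ satisfy the full system of three quadratic form equations, not merely the derived $A + 2C$ equation exploited by the positive-definiteness argument. This verification collapses to the single identity $|\pm \xi^{a}|^2 = 1$, which trivializes the $\tau$- and $\tau^2$-coefficient constraints. A secondary concern is that the hypothesis ``$\chi \nmid w_i$ for all $i$'' must be read as $w_i \neq 0$ for each $i$ (since $\chi \mid 0$), which is exactly what is needed to force $q(a_i) \geq 1$.
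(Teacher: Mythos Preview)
Your proposal is correct and follows essentially the same approach as the paper: both cases reduce to the positive-definite equation $q(a)+q(b)+q(c)=A+2C$ (with $A+2C=1$ for $f=0$ and $A+2C=3$ for $f=3$ via the simplification in the example), then invoke Remark~\ref{unitabs1}. Your argument is in fact slightly more careful than the paper's in the $f=3$ case, where you explicitly use the hypothesis $\chi\nmid w_i$ to force each $q$-value to be at least $1$ (ruling out partitions like $(3,0,0)$ or $(2,1,0)$), a step the paper leaves implicit; your verification of the companion equations $q_1,q_2$ is not needed for the corollary as stated, since $z$ is assumed to be a unit vector and hence already satisfies all three quadratic-form equations.
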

\begin{proof}
    When $f=0,$ we have $q(a) + q(b) + q(c) = 1 + 2(0) = 1.$ Since $q$ is positive definite and takes integral values, $q(a) = 1$ and $q(b) = q(c) = 0$ up to permutation of $a,b$ and $c$. Then by Remark \ref{unitabs1}, the corollary follows. 

\noindent When $f=3,$ we have $q(a) + q(b) + q(c) = 3 + 2(0) = 3.$ Therefore $q(a) = q(b) = q(c) = 1.$ It follows that $(w_1(\xi),w_2(\xi),w_{3}(\xi)) = (\pm \xi^{a}, \pm \xi^{b}, \pm \xi^{c})$ for some integers $a,b$ and $c.$
\end{proof}

Now we argue that the unitary matrices in $U(3,R_{\chi})$ of sde $3$ are precisely the ones of the form $D_1 H D_2,$ for some matrices $D_1,D_2 \in U(3,R_{\chi})$ of sde $0.$ For this we first derive relations between two orthogonal unit vectors of sde $3.$

\begin{remark}
    Two unit vectors $z_{1} = \frac{1}{\chi^3}(\xi^{a_1},\xi^{a_2},\xi^{a_3})$ and $z_{2} = \frac{1}{\chi^3}(\xi^{b_1},\xi^{b_2},\xi^{b_3})$ of sde $3,$ are orthogonal iff $\xi^{a_1 - b_1} + \xi^{a_2 - b_2} + \xi^{a_3 - b_3} = 0$ iff $1 + \xi^p + \xi^q = 0,$ where $p = (a_2 - b_2) - (a_1 - b_1) = (a_2 - a_1) - (b_2 - b_1)$ and $q = (a_3 - b_3) - (a_1 - b_1) = (a_3 - a_1) - (b_3 - b_1).$ Note that by taking conjugate of this equation we get $\xi^p + \xi^q = \xi^{-p} + \xi^{-q}$ which implies $\xi^{p+q} = 1$ and therefore, $p + q = 0 \pmod{9},$ so $(p,q) = (3,6) \text{ or } (6,3) \pmod{9}.$ 
\end{remark}

\noindent From the above remark if ${z_{1},z_{2},z_{3}}$ are mutually orthogonal unit vectors of sde $3$, then after an appropriate permutation we may write the matrix with $z_i$ as columns as  $$[z_1, z_2, z_3] = \text{diag}(\pm \xi^{\alpha_1}, \pm \xi^{\alpha_2}, \pm \xi^{\alpha_3})* H* \text{diag}(\pm \xi^{\beta_1}, \pm \xi^{\beta_2}, \pm \xi^{\beta_3}),$$
for some integers $\alpha_i,\beta_i.$

\noindent This observation shows that the choice of gates Clifford$+\mathcal{D}$ is not arbitrary, as the $H$ gate naturally arises as the unique gate of sde $3$ modulo multiplication by matrices of sde $0$ on both sides.  

\section{Main results}
\label{mainresults}
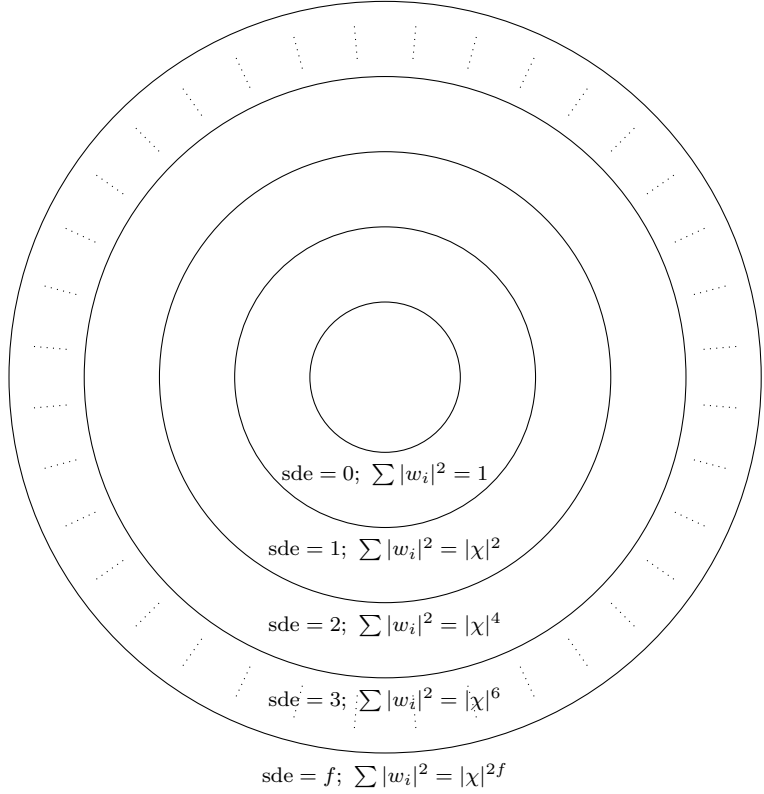
\begin{figure}[h]
    \centering
    \begin{tikzpicture}
    
    \foreach \r in {1,2,3,4,5}{
        \draw (0,0) circle (\r);
    }
    
    \foreach \r/\label in {
        1/{\scriptsize $\text{sde}=0;~\sum|w_{i}|^{2}=1$},
        2/{\scriptsize $\text{sde}=1;~\sum|w_{i}|^{2}=|\chi|^{2}$},
        3/{\scriptsize $\text{sde}=2;~\sum|w_{i}|^{2}=|\chi|^{4}$},
        4/{\scriptsize $\text{sde}=3;~\sum|w_{i}|^{2}=|\chi|^{6}$},
        5/{\scriptsize $\text{sde}=f;~\sum|w_{i}|^{2}=|\chi|^{2f}$}
    }{
        \node[anchor=north] at (0,-\r) {\label};
    }
    
    \foreach \angle in {5,15,...,355}{
        \ifnum\angle<360
            \draw[dotted] (\angle:4.25) -- (\angle:4.75) {};
        \fi
    }
    \end{tikzpicture}
    \caption{Unit vectors $R_{n,\chi}^{p}$ can be partitioned by their smallest denominator exponent. By multiplying a unit vector of a given sde by a unitary $U$ in $U(p,R_{n,\chi})$ we may change its sde. Having control of this change by choosing an appropriate unitary $U$ helps us in exact synthesis in $U(p,R_{n,\chi})$. In what follows we study the possibility of this phenomenon when $n = p^{l} = 3,8,9$. }
\end{figure}

In the following lemma we observe that the sde of the entries of a given unit vector  in $R_{8,\chi}^2$ or $R_{3,\chi}^3$ or $R_{9,\chi}^3$ are equal.   
\begin{lemma}\label{equalsde}~
\begin{enumerate}[(i)]
\item Suppose $z = (z_1,z_2)^{T}$ is a unit vector whose entries are in $R_{8,\chi}$. Then $\text{sde}(z_1) = \text{sde}(z_2)$
\item Suppose $z = (z_1,z_2,z_3)$ is a unit vector whose entries are in $R_{3,\chi}$ or $R_{9,\chi}$. Then $\text{sde}(z_1) = \text{sde}(z_2) = \text{sde}(z_3)$
\end{enumerate}
\end{lemma}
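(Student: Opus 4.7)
The plan is to set $f := \text{sde}(z)$, which by definition equals $\max_i \text{sde}(z_i)$, and to write $z = \chi^{-f}(w_1,\ldots,w_p)$ with $w_i \in \mathbb{Z}[\zeta]$ and at least one $w_i$ not divisible by $\chi$. The lemma is equivalent to showing that \emph{every} $w_i$ satisfies $\chi \nmid w_i$, i.e.\ that $\text{sde}(z_i) = f$ for all $i$. The case $f = 0$ is immediate, so I assume $f \geq 1$ throughout.

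The central step is to reduce the unit vector equation
\begin{equation*}
\sum_{i=1}^{p} |w_i|^2 \;=\; |\chi|^{2f}
\end{equation*}
modulo $\chi$ using the ring homomorphism $P \colon \mathbb{Z}[\zeta] \to \mathbb{Z}/p$ from Section~\ref{prelims}. Since $\bar\chi = 1 - \zeta^{-1} = -\zeta^{-1}\chi$ differs from $\chi$ by a unit, $|\chi|^{2f}$ is a unit multiple of $\chi^{2f}$, so $P(|\chi|^{2f}) = 0$ because $f \geq 1$. On the left, writing $w_i(\zeta) = \sum_j a_{i,j}\zeta^{j}$, one has $\bar w_i(\zeta) = \sum_j a_{i,j}\zeta^{-j}$, and hence $P(\bar w_i) = \sum_j a_{i,j} = P(w_i)$. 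Therefore $P(|w_i|^2) = P(w_i)^2$, and I would collect the reductions to obtain
\begin{equation*}
\sum_{i=1}^{p} P(w_i)^2 \;\equiv\; 0 \pmod{p}.
\end{equation*}

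What remains is a short case analysis on $p$, driven by the structure of squares in $\mathbb{Z}/p$. Set $\epsilon_i := P(w_i)^2$ and note that $\epsilon_i = 0$ iff $\chi \mid w_i$; the target is to rule out any $\epsilon_i = 0$. When $p = 2$ we have $\epsilon_i = P(w_i) \in \{0,1\}$, so $\epsilon_1 + \epsilon_2 \equiv 0 \pmod 2$ together with at least one $\epsilon_i = 1$ forces both to be $1$. When $p = 3$, every square in $\mathbb{Z}/3$ lies in $\{0,1\}$ (as $2^2 \equiv 1$), so the three terms $\epsilon_i \in \{0,1\}$ sum to $0$ mod $3$ with at least one of them equal to $1$, which forces all three to be $1$. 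In either case $\chi \nmid w_i$ for every $i$, and this settles (i) and (ii) uniformly.

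The delicate point, which I would flag as the real content of the argument rather than an obstacle, is why one cannot push beyond $p \leq 3$: for $p \geq 5$ the set of nonzero squares in $\mathbb{Z}/p$ is large enough to admit nontrivial vanishing sums of fewer than $p$ terms (already $1 + 4 \equiv 0 \pmod 5$), so the single congruence $\sum_i P(w_i)^2 \equiv 0$ no longer forces each $w_i$ to be a unit modulo $\chi$. This is precisely the combinatorial failure noted in the introduction for $p \geq 5$, and it is what makes the lemma special to the qubit ($p=2$) and qutrit ($p=3$) settings.
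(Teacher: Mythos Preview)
Your proof is correct and follows essentially the same approach as the paper: clear denominators to get $\sum_i |w_i|^2 = |\chi|^{2f}$, apply the reduction map $P$ to obtain $\sum_i P(w_i)^2 \equiv 0 \pmod p$, and use that the nonzero squares in $\mathbb{Z}/2$ and $\mathbb{Z}/3$ are all equal to $1$ to force every $P(w_i)$ nonzero. Your version is in fact slightly more careful than the paper's (you separate the trivial $f=0$ case and explain why $P(\bar w_i) = P(w_i)$), and your closing remark on the failure for $p \geq 5$ matches the paper's own observation in the conclusion.
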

\begin{proof}
Write $z = \frac{1}{\chi^f} (w_1,w_2)^T$ in case 1 and $z = \frac{1}{\chi^f} (w_1,w_2, w_3)^T$ in case 2. By writing the unit vector condition for $z$ and applying the maps $P : \mathbb{Z}[\omega] \xrightarrow{} \mathbb{Z}_2$ and $P : \mathbb{Z}[\xi] \xrightarrow{} \mathbb{Z}_3$ respectively, we get $w_1(1)^2+ w_{2}(1)^2 = 0  \pmod{2}$ and $w_{1}(1)^2 + w_{2}(1)^2 + w_{3}(1)^2 = 0 \pmod{3}$ respectively.

\noindent The solutions of $w_1(1)^2+ w_{2}(1)^2 = 0  \pmod{2}$ or $w_{1}(1)^2 + w_{2}(1)^2 + w_{3}(1)^2 = 0 \pmod{3}$  are such that $w_{i}(1) = 0$ for all $i$ or $w_{i}(1) \neq 0$ for all $i$.   
\end{proof}

\begin{remark}
    Moreover, we can argue that given a unitary matrix $U \in U(p,R_{n,\chi}),$ when $n = p^l$ and $p \leq 3$ the sde of all the entries of $U$ are equal. Indeed, note that each entry $u_{ij}$ of $U$ is shared by the $i$-th row $u_i$ and the $j$-th column $w_j$ of $U$. Since both $u_i$ and $w_j$ are unit vectors in $R_{n,\chi},$ so the claim follows by the above lemma. Therefore, the notions of sde of an element, a vector and a matrix coincide when $p = 2$ or $3.$  
\end{remark}

\subsection{Qubit Clifford$+T$}
\label{qubit}
In this section, we give an alternative proof of the main ideas in \cite{kmm} using the `derivatives mod p' trick introduced earlier. This also gives us an alternative approach for exact synthesis of a qubit gate $U$ with entries in $\mathbb{Z}[i,\frac{1}{\sqrt{2}}]$. In particular, given a unitary $U$ in $U(2,R_{8,\chi})$ of sde $\geq 3$, we consider its first column $z$ and find an integer $k$ such that $\text{sde}(HT^{k}z) = \text{sde}(z) - 1.$ We do this by computing the derivative of the entry of $HT^{k}z$ after clearing the denominator $\sqrt{2}$ and by solving certain linear equations mod $2$. 
 This is precisely part of lemma 3 of \cite{kmm}. By proceeding inductively we find integers $k_1, \ldots, k_r$ such that $$ \text{sde}(HT^{k_1}\ldots HT^{k_r}u) = 2$$ 

\noindent We already know all the unit vectors of $\text{sde} = 2$ or lower (in fact, we can find them by the qubit analogue of section \ref{unit vectors}). Therefore, we recover $U$ as a word in $H,T$ and a $\text{sde} = 2$ matrix. The following remark highlights the simple yet important relation between $\chi$ and $\sqrt{2}$.

\begin{remark}
    In $\mathbb{Z}[\zeta_8],$ we have the identity
    $$2 = (1-\zeta_8)(1-\zeta_8^3)(1-\zeta_8^5)(1-\zeta_8^7) = u\chi^4$$
    where $\chi = 1 - \zeta_8$ and $u = \frac{(1-\zeta_8^3)}{(1-\zeta_8)}\frac{(1-\zeta_8^5)}{(1-\zeta_8)}\frac{(1-\zeta_8^7)}{(1-\zeta_8)}.$ This can be obtained by the fact that $\Phi_8(x) = x^4 + 1 = (x-\zeta_8)(x-\zeta_8^3)(x-\zeta_8^5)(x-\zeta_8^7).$ Here $u$ is a unit in $\mathbb{Z}[\zeta_8].$ We can further simplify this identity by observing that $\zeta_8^5 = \zeta_8^{-3}$ and $\zeta_8^7 = \zeta_8^{-1}$ which gives $2 = \zeta_8^4(1-\zeta_8)^2(1-\zeta_8^3)^2$, i.e., $2$ is a perfect square $\mathbb{Z}[\zeta_8]$ which further implies that $\sqrt{2} = \zeta_8^2 (1-\zeta_8)(1-\zeta_8^3) \in \mathbb{Z}[\zeta_8].$ This observation immediately gives us that $sde(z,\chi) = 2sde(z,\sqrt{2}).$ 
\end{remark}

We now the prove the lemmas stated in \cite{kmm} using our methods. Note that we write $\text{sde}(z,\chi)$ as $\text{sde}(z)$.

\begin{lemma} (Lemma 2 in \cite{kmm})
Let 
$\begin{pmatrix}
z_{1}\\
z_{2}\\
\end{pmatrix}$
be a unit vector with entries in $\bZ[\zeta_{8}]_{\chi}$ and let $\text{sde}(z,\chi)\geq 4$, Then for any integer $k$:
\begin{equation}
-1 \leq \text{sde}\left(\frac{z_{1}+\zeta_{8}^{k}z_{2}}{\sqrt{2}}\right)-\text{sde}(z_{1}) \leq 1
\label{eqnsdelemma2}
\end{equation}
\end{lemma}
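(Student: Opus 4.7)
By Lemma \ref{equalsde}, $\text{sde}(z_1) = \text{sde}(z_2) = f$, so I may write $z_i = w_i/\chi^f$ with $\chi \nmid w_1$ and $\chi \nmid w_2$. The remark immediately preceding the lemma gives $\sqrt{2} = u\chi^2$ for a unit $u \in \mathbb{Z}[\zeta_8]$, hence
$$\frac{z_1 + \zeta_8^k z_2}{\sqrt{2}} = \frac{u^{-1}(w_1 + \zeta_8^k w_2)}{\chi^{f+2}}, \qquad \text{sde}\left(\frac{z_1+\zeta_8^k z_2}{\sqrt{2}}\right) = \max\bigl(0,\,(f+2) - \text{gde}(w_1 + \zeta_8^k w_2)\bigr).$$
The inequality to be proved is therefore equivalent to $1 \leq \text{gde}(w_1 + \zeta_8^k w_2) \leq 3$, so the plan is to establish each bound separately.

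\textbf{Easy direction: $\text{gde} \geq 1$.} Evaluating the unit vector identity $|w_1|^2 + |w_2|^2 = |\chi|^{2f}$ at $\zeta_8 = 1$ and reducing mod $2$ yields $w_1(1)^2 + w_2(1)^2 \equiv 0 \pmod 2$, using $f \geq 1$. Because $\chi \nmid w_i$ we have $w_i(1) \equiv 1 \pmod 2$, so $(w_1 + \zeta_8^k w_2)(1) \equiv 1 + 1 \equiv 0 \pmod 2$. The derivative criterion (Theorem \ref{derivativecriterion}) then gives $\chi \mid (w_1 + \zeta_8^k w_2)$.

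\textbf{Hard direction: $\text{gde} \leq 3$.} This is the main obstacle, and the plan is to argue by contradiction, invoking Lemma \ref{equalsde} a second time. Suppose $a := \text{gde}(w_1 + \zeta_8^k w_2) \geq 4$; since $(\chi^4) = (2)$, this gives $w_1 + \zeta_8^k w_2 \equiv 0 \pmod 2$, so in characteristic $2$ one obtains $w_1 \equiv \zeta_8^k w_2 \pmod 2$, and therefore also $w_1 - \zeta_8^k w_2 \equiv 0 \pmod 2$, so $b := \text{gde}(w_1 - \zeta_8^k w_2) \geq 4$. Since $HT^k z$ is a unit vector, Lemma \ref{equalsde} applied to it forces $\max(0,(f+2)-a) = \max(0,(f+2)-b)$, so either $a = b$, or both $a, b \geq f+2 \geq 6$. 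In the latter case $\chi^6 \mid 2w_1 = u\chi^4 w_1$ yields $\chi^2 \mid w_1$, contradicting $\chi \nmid w_1$. In the former case, write $w_1 + \zeta_8^k w_2 = \chi^a A$ and $w_1 - \zeta_8^k w_2 = \chi^a B$ with $\chi \nmid A$ and $\chi \nmid B$, so $u\chi^4 w_1 = 2 w_1 = \chi^a(A+B)$, giving $u w_1 = \chi^{a-4}(A+B)$. For $a > 4$ this forces $\chi \mid w_1$; for $a = 4$ it gives $u w_1 = A+B$, and $A(1) \equiv B(1) \equiv 1 \pmod 2$ makes $(A+B)(1) \equiv 0 \pmod 2$, again forcing $\chi \mid w_1$. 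Either way we contradict $\chi \nmid w_1$, so $a \leq 3$, completing the proof.
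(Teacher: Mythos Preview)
Your argument is correct, and in fact it goes further than the paper's own proof. The setup and the lower bound $\text{gde}\geq 1$ match the paper exactly. For the other direction, however, the paper only shows $\text{gde}(q)\leq 4$ (via $q+q'=2q_1$ and $2=u\chi^4$), which yields merely the weaker bound $-2\leq\text{sde}(HT^kz)-\text{sde}(z)$; it then explicitly states that the sharper bound in the lemma requires ``a finer analysis of $q$ and $q'$'' and \emph{skips} that analysis. You supply precisely the missing step: the observation that $\chi^4\mid q$ forces $\chi^4\mid q'$ (since $-1\equiv 1\pmod 2$), followed by the parity argument on $A(1),B(1)\in\mathbb{Z}_2^{\times}$ that extracts one extra factor of $\chi$ from $q+q'=2w_1$, contradicting $\chi\nmid w_1$.

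You are also more careful than the paper on one point. The paper asserts ``$\text{gde}(q)=\text{gde}(q')$ by Lemma~\ref{equalsde}'', but that lemma only equates the \emph{sde}'s of the two entries of $HT^kz$, which is strictly weaker. Your dichotomy ($a=b$, or both $a,b\geq f+2$) is the honest translation, and the hypothesis $\text{sde}(z)\geq 4$ is exactly what makes the second branch give $\chi^6\mid 2w_1$ and hence $\chi^2\mid w_1$.
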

\begin{proof}
Now recall that 
\[
HT^{k}z=
\begin{pmatrix}
\frac{z_{1}+\zeta_{8}^{k}z_{2}}{\sqrt{2}}\\
\frac{z_{1}-\zeta_{8}^{k}z_{2}}{\sqrt{2}}\\
\end{pmatrix}
=\frac{1}{\sqrt{2}\chi^{f}}
\begin{pmatrix}
{q_{1}+\zeta_{8}^{k}q_{2}}\\
q_{1}-\zeta_{8}^{k}q_{2}\\
\end{pmatrix}
\]
Let $q:= q_{1}+\zeta_{8}^{k}q_{2}$ and $q' = q_{1}-\zeta_8^{k}q_{2}.$  Note that $\text{gde}(q) = \text{gde}(q')$ by Lemma \ref{equalsde}. Let $l$ be the largest integer such that $\chi^{l}$ divides $q = q_{1}+\zeta_{8}^{k}q_{2}$, i.e., $l = \text{gde}(q)$. Suppose $l \geq 5,$ i.e., $\chi^{5}$ divides $q$ and $q'$. Since $q+q' = 2q_1,$ we have $\chi^5$ divides $2q_1$. Since $2 = \chi^{4}u$ for some unit $u \in \mathbb{Z}[\zeta_{8}],$ we may conclude that $\chi$ divides $q_1$ which is a contradiction. Therefore, $l \leq 4.$

\medskip

\noindent On the other hand, $q(1) = q_{1}(1) + q_{2}(1) = 0 \pmod{2}.$ Therefore, $\chi$ divides $q(1)$ and hence $l \geq 1.$

Since $\chi^2$ divides $\sqrt{2},$ we have $$\text{sde}(HT^{k}z) = \text{sde}(z)+2-l$$
Therefore,
\[
\text{sde}(HT^{k}{z})-\text{sde}({z}) = 2-l
\]

Since $1 \leq l \leq 4$
\[
-1 \leq \text{sde}\left(\frac{z_{1}+\zeta_{8}^{k}z_{2}}{\sqrt{2}}\right)-\text{sde}(z_{1}) \leq 2
\]

\noindent If $\text{sde}(z) \geq 3$ then we may in fact conclude that 
$\text{sde}\left(\frac{z_{1}+\zeta_{8}^{k}z_{2}}{\sqrt{2}}\right)-\text{sde}(z_{1}) \leq 1$ by a finer analysis of $q$ and $q'$. However, we skip the proof.
\end{proof}

\noindent Let $z$ = $\begin{pmatrix}
z_{1}\\
z_{2}\\
\end{pmatrix} = \frac{1}{\chi^f} \begin{pmatrix}
q_{1}\\
q_{2}\\
\end{pmatrix}$ and $q(\zeta_{8}) = q_{0} + \zeta_{8}^k q_{1}$ as above. 
 Note that the first three derivatives of $q(\zeta_{8})$ (starting from $0$-th) at $\omega = 1$ are $q(1) = q_0(1) + q_1(1),$ $q'(1) = q_0'(1) + q_1'(1) + k q_1(1)$ and $\frac{q^{(2)}(1)}{2!} = \frac{q_{1}^{(2)}(1)}{2!} + \frac{q_{2}^{(2)}(1)}{2!} + kq_{2}'(1) + {k\choose 2}q_2(1).$ If $k = 2t + e,$ where $e = 0 \text{ or } 1$ then we note that ${k \choose 1} = e \pmod{2}$ and ${k \choose 2} = t \pmod{2}.$ This set-up immediately yields us the following theorem. 

\begin{theorem} (Lemma 3 in \cite{kmm})
    Let $z = \begin{pmatrix}
z_{1}\\
z_{2}\\
\end{pmatrix} = \frac{1}{\chi^f} \begin{pmatrix}
q_{1}\\
q_{2}\\
\end{pmatrix}$ be a unit vector in $R_{\chi}^2$ such that $q_1,q_2 \in R$ and $\chi \nmid q_1,q_2$ then for each $s \in \{-1,0,1\}$ there always exists $k \in \{0,1,2,3\}$ such that $$\text{sde}(HT^{k}z) - \text{sde}(z) = s$$ 
\end{theorem}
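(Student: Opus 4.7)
My plan is to combine the formula $\mathrm{sde}(HT^{k}z) - \mathrm{sde}(z) = 2-l$ from the previous lemma (with $l := \mathrm{gde}(q,\chi)$ and $q = q_1 + \zeta_8^k q_2$) with the derivative criterion of Theorem \ref{derivativecriterion}, and then show that as $k$ ranges over $\{0,1,2,3\}$ the gde $l$ takes each of the values $1,2,3$. Since $s = 2-l$, this gives $s = 1, 0, -1$ respectively. I assume the refined form of the previous lemma asserting $l \leq 3$ (the ``finer analysis'' remark); absent that, the main obstacle would be to rule out $l=4$ directly using the unit vector relation $|q_1|^2 + |q_2|^2 = |\chi|^{2f}$ modulo $\chi^k$ for small $k$.

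The first step is to expand the derivatives of $q(x) = q_1(x) + x^k q_2(x)$ at $x=1$ using the Leibniz rule. Writing $q_{i,r} := \frac{q_i^{(r)}(1)}{r!} \bmod 2$, and $q_r$ for the analogous Taylor coefficient of $q$, one gets the mod-$2$ formulas already recorded just before the theorem:
\begin{align*}
q_0 &\equiv q_{1,0} + q_{2,0}, \\
q_1 &\equiv q_{1,1} + q_{2,1} + k\, q_{2,0}, \\
q_2 &\equiv q_{1,2} + q_{2,2} + k\, q_{2,1} + \tbinom{k}{2}\, q_{2,0}.
\end{align*}
Because $\chi \nmid q_i$, one has $q_{i,0} \equiv 1 \pmod 2$, so $q_0 \equiv 0$ automatically. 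By Theorem \ref{derivativecriterion} this forces $l \geq 1$ for every $k$.

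Next I would apply the criterion one derivative at a time. The condition $l \geq 2$ becomes $q_1 \equiv 0$, i.e.\ $k \equiv q_{1,1} + q_{2,1} \pmod 2$, which pins down a single parity class for $k$. The two ``wrong-parity'' values of $k$ therefore give $l = 1$ and hence $s = +1$, yielding the first of the three desired $s$-values. For the two ``right-parity'' values $k$ and $k+2$, the key observation is that $kq_{2,1}$ is invariant mod $2$ under the shift by $2$, while $\binom{k}{2}$ flips parity (since $\binom{0}{2},\binom{2}{2} = 0,1$ and $\binom{1}{2},\binom{3}{2} = 0,1$). Consequently $q_2$ differs by $q_{2,0} \equiv 1 \pmod 2$ between the two right-parity values: exactly one satisfies $q_2 \equiv 0$, giving $l \geq 3$, and the other satisfies $q_2 \equiv 1$, giving $l=2$ and hence $s=0$.

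Finally, invoking $l \leq 3$ from the refined version of the previous lemma, the unique $k$ with $l \geq 3$ satisfies $l = 3$, producing $s = -1$. Thus the four choices $k \in \{0,1,2,3\}$ realize exactly the three values $s \in \{-1,0,1\}$, proving the theorem. The only non-routine ingredient is the upper bound $l \leq 3$; all the rest is a mechanical mod-$2$ linear algebra computation made transparent by the derivative formalism of Section \ref{derivative}.
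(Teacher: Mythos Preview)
Your proposal is correct and follows essentially the same approach as the paper: translate the condition on $l=\mathrm{gde}(q,\chi)$ via Theorem~\ref{derivativecriterion} into mod-$2$ vanishing of the Taylor coefficients $q_0,q_1,q_2$, and then solve for $k$ (the paper parametrizes $k=2t+e$ and solves linearly for $e$ and $t$, which is exactly your parity-and-shift argument in disguise). Both the paper and your proposal rely on the bound $l\le 3$ coming from the unproven ``finer analysis'' in the preceding lemma, so you have correctly identified the one non-routine ingredient.
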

\begin{proof}
    First, we notice that $q(1) = q_0(1) + q_1(1) = 0 \pmod{2}$ by default. One can choose $e = -q_1'(1) - q_2'(1) $ so that $q'(1) = 0$ and choose $t = -(\frac{q_{1}^{(2)}(1)}{2!} + \frac{q_{2}^{(2)}(1)}{2!} + e q_{2}'(1))$ so that $\frac{q^{(2)}(1)}{2!} = 0.$ In summary, we can choose an integer $0 \leq k \leq 3$ such that $q(1) = q'(1) = \frac{q^{(2)}(1)}{2!} = 0 \pmod{2}.$ By Theorem \ref{derivativecriterion} for $n=8,$ $\chi^{3}$ divides $q(\zeta_{8}),$ i.e, $\text{sde}(HT^{k}z) - \text{sde}(z) = 2 - l = -1.$

\noindent In a similar vein, we can choose $k$ such that $f(1) = 0$ and $f'(1) \neq 0$ or $f(1) = f'(1) = 0$ and $f''(1) \neq 0.$
\end{proof}

\subsection{Qutrit Clifford$+R$}
\label{cliffordR}
 Let $U \in U(3,R_{\chi})$, where $R = \mathbb{Z}[\omega]$, where $\omega = e^\frac{2\pi i}{3},$ $\chi = 1 - \omega$ and $z = \frac{1}{\chi^{f}}(p(\omega),q(\omega),r(\omega))^{\text{T}}$ be the first column of $U$. Here $p,q$ and $r$ are polynomials with integer coefficients such that $\chi \nmid p(\omega), q(\omega)$ and $r(\omega)$. Here $f = sde(z,\chi)$ (denoted as $sde(z)$) by definition. Since $\omega$ has $x^2 + x + 1$ as the minimal polynomial over $\mathbb{Z}$, we may assume that $p,q,r$ are polynomials of degree at most $1$. In the following remark, we show that $3 = \chi^{2} u$ for some unit $u$ in $\mathbb{Z}[\omega].$ Therefore, $sde(H) = 1.$

\begin{remark}
    Recall that $\Phi_3(x) = x^2 + x + 1 = (x - \omega)(x-\omega^2),$ which gives us that $3 = (1-\omega)(1-\omega^2) = (1-\omega)^2 \frac{1-\omega^2}{1-\omega} = \chi^2 u,$ where $u = \frac{1-\omega^2}{1-\omega}$ is a unit in $\mathbb{Z}[\omega].$ In fact, $ \frac{1-\omega^2}{1-\omega} = 1+\omega = -\omega^2.$ Therefore, $-3 = \omega^2 (1-\omega)^2$ is a perfect square in $\mathbb{Z}[\omega],$ which implies $\sqrt{-3} = \omega (1-\omega) = \omega \chi.$
\end{remark}

\begin{theorem}
    For every unit vector $z \in R_{\chi}^{3}$ of sde $f \geq 1$, there exists integers $0 \leq a_0, a_1, a_2 \leq 2, \epsilon \in \{0,1\}$ and $\delta \in {0,1,2}$ such that $\text{sde}(HDR^{\epsilon}X^{\delta}z, \chi) < \text{sde}(z,\chi),$ where $R = \text{diag}(1,1,-1)$ and $D = \text{diag}(\omega^{a_0}, \omega^{a_1}, \omega^{a_2})$.
    \label{cliffrthm}
\end{theorem}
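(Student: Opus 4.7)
The plan is to leverage the derivative criterion (Theorem \ref{derivativecriterion}) at $n=3$: for $\alpha(\omega) \in \mathbb{Z}[\omega]$, one has $\chi^2 \mid \alpha(\omega)$ if and only if $\alpha(1) \equiv \alpha'(1) \equiv 0 \pmod{3}$. By Lemma \ref{equalsde} (case $p=3$), it suffices to exhibit gate parameters $\epsilon,\delta,a_0,a_1,a_2$ for which the \emph{first} entry of $HDR^\epsilon X^\delta z$ has sde strictly less than $f$. Writing $z = \chi^{-f}(p,q,r)^T$ with $\chi \nmid p,q,r$ and using $\sqrt{-3} = \omega\chi$, this first entry takes the form
\[
(HDR^\epsilon X^\delta z)_1 \;=\; \frac{\omega^{2}}{\chi^{f+1}}\,\Phi(\omega), \qquad \Phi(\omega) := \omega^{a_0} s_1(\omega) + \omega^{a_1} s_2(\omega) + \omega^{a_2} s_3(\omega),
\]
where $(s_1,s_2,s_3)$ is the cyclic rotation of $(p,q,r)$ selected by $X^\delta$, with the third entry negated when $\epsilon = 1$. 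It is therefore enough to choose the parameters so that $\Phi(1) \equiv 0$ and $\Phi'(1) \equiv 0 \pmod{3}$; then $\chi^2 \mid \Phi(\omega)$ and the sde drops by at least $1$.

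For $\Phi(1)$, observe that $\omega^{a_i} \mapsto 1$ at $\omega = 1$, so $\Phi(1) = s_1(1)+s_2(1)+s_3(1)$ is independent of the $D$-parameters. Since $\chi \nmid p,q,r$, each of $p(1),q(1),r(1)$ lies in $\{\pm 1\} \pmod 3$. If they all agree, take $\epsilon = \delta = 0$, so $\Phi(1) \equiv \pm 3 \equiv 0 \pmod 3$. Otherwise exactly one of the three differs in sign from the other two; rotate that odd entry into the third coordinate using $X^\delta$ and negate it via $\epsilon = 1$. Either way one secures $s_1(1) \equiv s_2(1) \equiv s_3(1) \equiv \sigma \pmod{3}$ for some $\sigma \in \{\pm 1\}$, whence $\Phi(1) \equiv 3\sigma \equiv 0 \pmod 3$.

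With $\epsilon,\delta$ fixed, a direct differentiation of $\Phi$ yields
\[
\Phi'(1) \;\equiv\; \sigma(a_0+a_1+a_2) \,+\, \bigl(s_1'(1)+s_2'(1)+s_3'(1)\bigr) \pmod{3},
\]
a single linear equation in $a_0+a_1+a_2$ over $\mathbb{Z}/3$; since $\sigma \in \{\pm 1\}$ is invertible mod $3$, it is always solvable with $a_i \in \{0,1,2\}$ (e.g.\ take $a_1 = a_2 = 0$ and set $a_0 \equiv -\sigma \cdot (s_1'(1)+s_2'(1)+s_3'(1)) \pmod 3$). The only substantive ingredient is the unit-vector identity $p(1)^2+q(1)^2+r(1)^2 \equiv 0 \pmod 3$, which pins each $s_i(1)$ to $\{\pm 1\}$ and thereby ensures the $R^\epsilon X^\delta$ flexibility is enough to kill $\Phi(1)$. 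No genuine obstacle arises because the resulting mod-$3$ system is linear, in sharp contrast with the $n=9$ case of Theorem \ref{maintheorem}, where the analogous system becomes quadratic and produces the $\Delta \in \mathbb{Z}_3$ obstruction.
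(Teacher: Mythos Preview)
Your proof is correct and follows essentially the same route as the paper: first use $X^{\delta}$ and $R^{\epsilon}$ to arrange $(p(1),q(1),r(1)) \equiv \pm(1,1,1) \pmod 3$, then apply the derivative criterion (Theorem~\ref{derivativecriterion}) for $n=3$ so that $\Phi(1)\equiv 0$ holds automatically and $\Phi'(1)\equiv 0$ becomes a single linear condition in $a_0+a_1+a_2$ over $\mathbb{Z}/3$, solvable because the leading coefficient $\sigma$ is a unit. The paper's proof is structurally identical; the only cosmetic differences are that you track the unit prefactor $\omega^{2}/\chi^{f+1}$ explicitly and specialize to $a_1=a_2=0$, whereas the paper leaves $a_1,a_2$ free and solves for $a_0$.
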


\begin{proof}
    We prove this theorem by providing an algorithm for choosing $\epsilon, \delta$ and $D$. 
    First note that $(p(1),q(1),r(1)) \pmod{3} = \pm (1,1,1) \text{ or } \pm (1,1,-1) \text{ or } \pm (1,-1,1) \text{ or } \pm (-1,1,1)$. The latter three cases can be reduced to $\pm (1,1,-1)$ by applying the Pauli gates $X$ or $X^2$. Indeed, the unit vector condition for $z$ can be written as $$|p(\omega)|^2 + |q(\omega)|^2 + |r(\omega)|^2 = |\chi|^{2f} $$
    By taking the image of the map $P : R \xrightarrow{} \mathbb{Z}_3$, we obtain the equation $p(1)^2 + q(1)^2 + r(1)^2 = 0$ in $\mathbb{Z}_3.$ Note that $p(1),q(1)$ and $r(1) \neq 0 \pmod{3}$ as $\chi$ does not divide any of them. Therefore, you get the possibilities listed above for $(p(1), q(1), r(1))$. 
    
\noindent Choose $\epsilon = 1$ if $(p(1),q(1),r(1)) = \pm(1,1,-1)$ and $\epsilon = 0$ otherwise. In other words, we may replace $z$ with $R^{\epsilon}X^{\delta}z$ for appropriate $\epsilon \in \{0,1\}, \delta \in \{0,1,2\}$ and assume that $(p(1),q(1),r(1)) = \pm (1,1,1).$

\medskip
    
\noindent Now, consider the polynomial $f(x) = x^{a_0}p(x) + x^{a_1}q(x) + x^{a_2}r(x).$ Note that ${f(\omega)}$  is precisely the first coordinate of $\chi^{f+1}HDR^{\epsilon}z$ up to multiplication by a unit in $\mathbb{Z}[\omega]$. Therefore, by Lemma \ref{equalsde}, $\text{sde}(HDU) = \text{sde}(HDz) = \text{sde}(\frac{f(\omega)}{\chi^{f+1}}) = f+1 - k,$ where $k$ is the largest integer for which $\chi^{k}$ divides $f(\omega).$ So $\text{sde}(HDz,\chi) - \text{sde}(z,\chi) = 1-k < 0$ iff $k \geq 2$, i.e., $\chi^{2}$ divides $f(\omega).$ 
    
\noindent Using the notion of derivatives mod $3$ and Theorem \ref{derivativecriterion}, we have that $\chi^2$ divides $f(\omega)$ iff $$f(1) = p(1) + q(1) + r(1) =  0 \pmod{3}$$ and $$f'(1) = a_{0}p(1) + a_1q(1) + a_2 r(1) + (p'(1) + q'(1) + r'(1)) = 0 \pmod{3}.$$ 

\noindent The first condition is automatically satisfied as $(p(1),q(1),r(1)) = \pm (1,1,1).$  Since each of $p(1), q(1)$ and $r(1)$ are non-zero, there exists plenty of triples $(a_0,a_1,a_2) \in \mathbb{Z}_{3}^{3}$ for which the second equation is satisfied - simply choose arbitrary $a_1$ and $a_2$ then determine $a_0$.
\end{proof}

\begin{corollary} \label{anyons}
    Clifford$+R$ = $U_{3}(\mathbb{Z}[\omega]_{\chi})$
\end{corollary}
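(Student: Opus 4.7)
The inclusion Clifford$+R \subseteq U_{3}(\mathbb{Z}[\omega]_{\chi})$ is immediate, since every generator ($H$, $S$, $R$, and the shift Pauli $X$) has entries in $\mathbb{Z}[\omega]_{\chi}$ and this ring is closed under multiplication. For the reverse inclusion, the plan is to take an arbitrary $U \in U_{3}(\mathbb{Z}[\omega]_{\chi})$ and reduce it to the identity by left-multiplying by Clifford$+R$ words, driving the sde down one step at a time until a base case is reached.

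First I would invoke the remark following Lemma \ref{equalsde}: since $p = 3$, all entries of $U$ share a common sde $f$, so in particular $f = \text{sde}(z)$ where $z$ is the first column of $U$. If $f \geq 1$, Theorem \ref{cliffrthm} supplies integers $a_0,a_1,a_2,\epsilon,\delta$ such that multiplication by $HDR^{\epsilon}X^{\delta}$ (with $D = \mathrm{diag}(\omega^{a_0},\omega^{a_1},\omega^{a_2})$) strictly decreases $\text{sde}(z)$. Every such prefactor lies in Clifford$+R$: $X$ and the cyclotomic diagonals $D$ are standard elements of the single-qutrit Clifford group (they arise from $H$ and $S$ via the usual Euler-type identities and conjugation by permutations). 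Iterating this step at most $f$ times produces a word $W$ in Clifford$+R$ such that $M := WU$ has first column of sde $0$, and hence, by the equal-sde remark again, has $\text{sde}(M) = 0$.

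To finish, I would invoke the sde $= 0$ classification, adapted from Section \ref{unit vectors} to the simpler ring $\mathbb{Z}[\omega]$. The form $|a+b\omega|^{2} = a^{2}-ab+b^{2}$ is positive definite and takes the value $1$ only at units $\pm \omega^{a}$; so each column of $M$ has a single nonzero entry $\pm \omega^{a}$, making $M$ a generalized permutation matrix of this shape. Such matrices are manifestly in Clifford$+R$: the underlying permutation is in the Clifford group, the phases $\omega^{a}$ are built from $S$ (and conjugation by permutations to move $S$ into any diagonal slot), and the signs $\pm 1$ are built from $R$. Therefore $M$, and hence $U = W^{-1}M$, lies in Clifford$+R$.

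The main obstacle I anticipate is not the reduction loop itself (which is handed to us by Theorem \ref{cliffrthm}), but verifying the base case cleanly: namely, showing both that sde $= 0$ unitaries are exactly monomial matrices with entries $\pm \omega^{a}$ and that every such monomial lies in Clifford$+R$. The first is the direct analogue of Corollary 4.4 in the easier $\mathbb{Z}[\omega]$ setting; the second is a bookkeeping argument about generating the finite group of generalized permutations. Once both are in hand, the proof is just the composition of the reduction algorithm with inversion of the accumulated word $W$.
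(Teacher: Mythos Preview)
Your proposal is correct and follows essentially the same approach as the paper: iterate Theorem \ref{cliffrthm} to drive the sde of the first column (hence, by the remark after Lemma \ref{equalsde}, of the whole matrix) to $0$, then identify the sde $0$ matrices as monomial matrices with entries $\pm\omega^{a}$ and observe these lie in Clifford$+R$. If anything, you spell out more carefully than the paper does why the prefactors $D$, $X$, and the base-case monomial matrices are actually in the group generated by $H$, $S$, $R$.
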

\begin{proof}
    By the above theorem, for each $U \in U_{3}(\mathbb{Z}[\omega]_{\chi})$ of sde at least $1,$ there exists a syllable of the form $HDR^{\epsilon}X^{\delta}$ such that $\text{sde}(HDR^{\epsilon} X^{\delta}z) \leq \text{sde}(z)-1.$ Therefore, by proceeding inductively there exists a sequences of pairs $(D_1,\epsilon_1, \delta_1), \ldots (D_k, \epsilon_k, \delta_k)$ such that $$\text{sde}(HD_{1}R^{\epsilon}X^{\delta_1}HD_{2}R^{\epsilon_2} \ldots HD_{k}R^{\epsilon_k}X^{\delta_k} U) = 0. $$ 
    We know that the matrices of sde $0$ in $U_{3}(\mathbb{Z}[\omega]_{\chi})$ are precisely the matrices generated by \newline $\text{diag}(\pm \omega^{a_0}, \pm \omega^{a_1}, \pm \omega^{a_2})$, $H^2$ and the Pauli gates.
\end{proof}

\begin{algorithm}[H]
\caption{Algorithm for Clifford$+R$ exact synthesis.}\label{algocliffordr}
\begin{algorithmic}
    \State \textbf{Input:} $U \in U(3, R_{3,\chi})$, let $z = U\ket{0}$
    \While {$\text{sde}(z) > 0$}
    \State \textbf{choose}: $(D,\epsilon,\delta)$, where $D = \text{diag}( \omega^{a_0}, \omega^{a_1}, \omega^{a_2})$ and $\epsilon\in \{0,1\}$ and $\delta \in \{0,1,2\}$
    \Statex \hspace{\algorithmicindent} 
    \textbf{such that} $\text{sde}(HDR^{\epsilon} X^{\delta}z) \leq \text{sde}(z) - 1$
    \State \textbf{update} $z = HDR^{\epsilon} X^{\delta} z$
    \Statex \EndWhile
\end{algorithmic}
\end{algorithm}
\begin{remark}
\noindent The while loop in Algorithm \ref{algocliffordr} is guaranteed to terminate since $\text{sde}(z)$ decreases in each iteration as shown in Theorem \ref{cliffrthm}.
\end{remark}
\subsection{Qutrit Clifford$+\mathcal{D}$}

Now we turn to the case $n = 3^2,$ i.e., $p=3$ and $l=2$. We denote $\xi = e^{\frac{2\pi i}{9}}$, a primitive $9$-th root of unity. Unlike for $n=3$ and $8$ dealt with in the previous sections, for $n=9$ the property $P$ fails to hold for a certain set of unit vectors. To be more precise, for each unit vector $z \in R_{9,\chi}^3$ there exists a $\Delta \in \mathbb{Z}_3$ such that $\text{sde}(HDR^{\epsilon}X^{f}z) - \text{sde}(z) = -1$ if and only if $\Delta \neq -1.$ Moreover, the $\Delta$ can be easily described in terms of the derivatives mod $3$ of the entries of $z.$ 

\medskip

\noindent We begin by deriving certain basic properties of derivatives $\pmod{3}$ of $w_{i}$ imposed by the unit vector condition, where $z = \frac{1}{\chi^{f}}(w_1, w_2, w_3)$ and $\chi \nmid w_{i}$ for some $i$. Recall that by applying the map $P : w_{i}(\xi) \mapsto w_{i}(1) \pmod{3}$ to the unit vector condition we may in fact conclude that $\chi \nmid w_{i}$ for all $i$. Moreover, for each $w_{i}$ there is a unique vector $(w_{i}(1), \ldots, \frac{w_{i}^{(5)}(1)}{5!}) \in \mathbb{Z}_{3}^{6}$ associated to it as discussed Section \ref{derivative}. Recall that this vector is simply the image of $w_{i}$ via the quotient map $\mathbb{Z}[\xi] \xrightarrow{} \mathbb{Z}[\xi]/(3)$ written with respect to the basis $\{1, \Bar{\chi}, \ldots, \Bar{\chi}^5\}.$ Here $\mathbb{Z}[\xi]/(3)$ is a $6$ dimensional vector space over $\mathbb{Z}_3$ with $\{1, \Bar{\chi}, \ldots, \Bar{\chi}^5\}$ as a basis.   

\medskip

\noindent If the vector $z$ has sde $\geq 1$ the unit vector condition imposes more restrictions on $\frac{w_{i}^{(k)}(1)}{k!}$ also for $k \geq 1$. 
We can similarly apply the higher analogues of the map $P,$ defined in  section \ref{derivative}, namely the maps $P^{(k)} : R \xrightarrow{} \mathbb{Z}_3,$ defined by $P^{(k)}(f(\zeta)) = \frac{f^{(k)}(1)}{k!}$ to get the restrictions on higher derivatives of $w_i.$ Note that $|w_{i}(\zeta)|^2 = w_{i}(\zeta)\overline{w_{i}(\zeta)} = w_{i}(\zeta)w_{i}(\zeta^{-1})$. The derivative of $w_{i}(x)w_{i}(x^{-1})$ is $ w_{i}'(x)w_{i}(x^{-1}) + w_{i}(x) w_{i}'(x^{-1})\frac{-1}{x^2},$ which is $0$ at $x=1$. Therefore, we get a vacuous equation `$0 = 0$' by applying $P^{(1)}$ on both sides of Equation \ref{unitvector}.

\noindent From now on, let us follow the notation $(p,q,r) := (w_1,w_2,w_3)$ whenever we work with qutrits to avoid cumbersome double indices. We further denote, $p_k := \frac{p^{(k)}(1)}{k!}$ (resp. for $q_k$ and $r_k$) for $0 \leq k \leq 5.$ With this notation we get the following equation by applying $P^{(2)}$ on both sides of Equation \ref{unitvector}$$2p_{2}p_{0} + p_{1}^2 - \frac{p_0 p_1}{2} + 2q_{2}q_{0} + q_{1}^2 - \frac{q_0 q_1}{2} + 2r_{2}r_{0} + r_{1}^2 - \frac{r_0 r_1}{2} = 0.$$ 

\noindent We record this expression for future use in the case where $(p_0,q_0,r_0) = (1,1,1).$ Note that all the $p_k,q_k,r_k$'s are elements of $\mathbb{Z}_3.$ 

\begin{lemma} \label{secondder}
Let $\pi_i := p_i + q_i + r_i$ for $i=1,2$. If $(p_0,q_0,r_0) = (1,1,1)$ then $$-\pi_2 + \pi_1 +  p_1^2 + q_1^2 + r_1^2  = 0$$
\end{lemma}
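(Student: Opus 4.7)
The plan is to prove this lemma as a direct specialization of the identity derived in the paragraph immediately preceding its statement, which was obtained by applying $P^{(2)}$ to both sides of the unit vector equation \eqref{unitvector}. That identity (which, as noted in the text, requires the RHS $|\chi|^{2f}$ to vanish under $P^{(2)}$, equivalently $f \geq 2$ by Theorem \ref{derivativecriterion}, using that $\bar\chi$ differs from $\chi$ by a unit and hence $|\chi|^{2f}$ has gde $2f$) reads
$$\sum_{w \in \{p,q,r\}} \Big( 2 w_0 w_2 + w_1^2 - \tfrac{w_0 w_1}{2} \Big) \equiv 0 \pmod{3}.$$
So the task reduces to plugging in $(p_0,q_0,r_0) = (1,1,1)$ and cleaning up in $\mathbb{Z}_3$.

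First I would substitute $p_0 = q_0 = r_0 = 1$, under which each term $2 w_0 w_2$ becomes $2 w_2$ and each term $-\tfrac{w_0 w_1}{2}$ becomes $-\tfrac{w_1}{2}$, while $w_1^2$ is unchanged. Summing the three contributions and grouping yields
$$2\pi_2 + (p_1^2 + q_1^2 + r_1^2) - \tfrac{\pi_1}{2} \equiv 0 \pmod{3},$$
where $\pi_i = p_i + q_i + r_i$.

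Finally I would appeal to the arithmetic identities $2 \equiv -1 \pmod{3}$ and $2^{-1} \equiv -1 \pmod{3}$; these give $2\pi_2 \equiv -\pi_2$ and $-\tfrac{\pi_1}{2} \equiv \pi_1$, so the relation collapses to
$$-\pi_2 + \pi_1 + (p_1^2 + q_1^2 + r_1^2) \equiv 0 \pmod{3},$$
which is exactly the conclusion of the lemma. There is no real obstacle beyond bookkeeping; the only point worth checking carefully is that the RHS of the unit vector equation genuinely vanishes under $P^{(2)}$, which is built into the standing hypothesis via the sde assumption and the gde computation for $|\chi|^{2f}$.
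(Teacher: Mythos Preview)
Your proposal is correct and follows exactly the approach the paper intends: substitute $(p_0,q_0,r_0)=(1,1,1)$ into the displayed identity obtained from $P^{(2)}$ and simplify using $2\equiv -1$ and $2^{-1}\equiv -1$ in $\mathbb{Z}_3$. Your added remark that the vanishing of the right-hand side under $P^{(2)}$ requires $f\geq 2$ (since $\mathrm{gde}(|\chi|^{2f})=2f$) is a welcome clarification that the paper leaves implicit.
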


\noindent For the qutrit Hadamard gate $H,$ recall that $\text{sde}(H,\chi) = 3$ as the denominator of $H$ is $\sqrt{-3} = u \chi^{3}$ for some unit $u \in \mathbb{Z}[\xi],$ where $\chi = 1 - \xi.$  For a unitary matrix $U \in U(3,R_{\chi}),$ by lemma \ref{equalsde}, the sde's of every coordinate of $Uz$ are equal.  
  
\noindent Therefore, $\text{sde}(HDu) \leq \text{sde}(u) - 1$ for some $D = \text{diag}(\xi^{a_1},\xi^{a_2},\xi^{a_3})$ if and only if $\chi^4$ divides $f(\xi) = \xi^{a_1}p(\xi) + \xi^{a_2}q(\xi) + \xi^{a_3}r(\xi),$ the first coordinate of $\chi^{3+f}HDz$ up to a unit in $\mathbb{Z}[\xi]$. Theorem \ref{maintheorem} essentially gives a criterion for the existence of $a_{1},a_{2}$ and $a_3$ such that $\chi^4$ divides $f(\xi).$ For this to work we may have to permute the coordinates of $z$ using the $X$ gate and multiply a specific coordinate by $-1$ using the $R$ gates. We begin with a proposition which reduces the problem to that of existence of $a_{1},a_{2}$ and $a_3$ such that $\chi^3$ divides $f(\xi).$

\begin{proposition} \label{reductionprop}
    Let $z \in R_{\chi}^3$ be a unit vector. If there exists a $D = \text{diag}(\xi^a,\xi^b,\xi^c)$ such that $\text{sde}(HDz,\chi) \leq \text{sde}(z,\chi)$ then there exits a $D' = \text{diag}(\xi^{a'},\xi^{b'},\xi^{c'})$ such that $\text{sde}(HD'z,\chi) < \text{sde}(z,\chi).$
\end{proposition}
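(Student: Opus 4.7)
My plan is to case-split on whether the inequality $\text{sde}(HDz,\chi) \leq \text{sde}(z,\chi)$ is strict. If it is, then $D' = D$ works and there is nothing to prove. Otherwise set $f := \text{sde}(z,\chi)$ and write $z = \chi^{-f}(p,q,r)^{T}$ with $\chi \nmid p,q,r$. Since $\sqrt{-3} = u\chi^3$ for a unit $u \in \mathbb{Z}[\xi]$, the entries of $HDz$ are, up to units, of the form $\chi^{-(f+3)} f_j$ for $j=1,2,3$, where $f_1 := \xi^a p + \xi^b q + \xi^c r$ and $f_2,f_3$ are the analogous sums with the prefactors $\omega = \xi^3$ and $\omega^2$ inserted per the rows of $H$. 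By Lemma \ref{equalsde} and Theorem \ref{derivativecriterion}, the equality $\text{sde}(HDz,\chi) = f$ forces each $f_j$ to be divisible by $\chi^3$ but not $\chi^4$. Producing $D'$ with $\text{sde}(HD'z,\chi) < f$ therefore reduces to producing $D'$ for which every $f_j^{D'}$ has vanishing order-$3$ Taylor coefficient at $\xi = 1$ mod $3$, i.e., $\chi^4 \mid f_j^{D'}$ for all $j$.

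The perturbation I will try is $D' := D\cdot\text{diag}(\omega^\alpha,\omega^\beta,\omega^\gamma) = \text{diag}(\xi^{a+3\alpha},\xi^{b+3\beta},\xi^{c+3\gamma})$ with $\alpha,\beta,\gamma \in \mathbb{Z}_3$ to be chosen. By Lucas's theorem, $\binom{a+3\alpha}{n} \equiv \binom{a}{n} \pmod 3$ for $n \in \{0,1,2\}$, so the first three Taylor coefficients of $\xi^{a+3\alpha}p$ at $\xi = 1$ match those of $\xi^a p$ mod $3$; the same is true for the $q$- and $r$-terms, hence $\chi^3 \mid f_j^{D'}$ is preserved automatically. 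At order $3$, $\binom{a+3\alpha}{3} \equiv \binom{a}{3} + \alpha \pmod 3$, and an analogous Leibniz computation — using $P^{(3)}(\omega h) \equiv \hat h_3 + \hat h_0$ and $P^{(3)}(\omega^2 h) \equiv \hat h_3 - \hat h_0 \pmod 3$, which follow from $\omega = \xi^3$ and Section \ref{derivative} — collapses the contributions to the clean shift formula
\[
P^{(3)}(f_j^{D'}) - P^{(3)}(f_j^{D}) \equiv \alpha\,\hat p_0 + \beta\,\hat q_0 + \gamma\,\hat r_0 \pmod 3,
\]
where $\hat p_0, \hat q_0, \hat r_0$ denote $p(1), q(1), r(1)$ mod $3$. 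Crucially, the shift is the same for every $j \in \{1,2,3\}$.

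The final step is a cancellation made possible by the hypothesis. Evaluating $f_1^D$ at $\xi = 1$ gives $\hat p_0 + \hat q_0 + \hat r_0 \equiv 0 \pmod 3$, and since $\chi \nmid p,q,r$ we have $\hat p_0,\hat q_0,\hat r_0 \in \{\pm 1\}$; the only solution is $\hat p_0 = \hat q_0 = \hat r_0 =: \varepsilon \in \{\pm 1\}$. An identical Leibniz computation applied to $f_2 - f_1$ and $f_3 - f_1$ shows $P^{(3)}(f_2^D) - P^{(3)}(f_1^D) \equiv \hat q_0 - \hat r_0 \equiv 0 \pmod 3$, and similarly for $f_3 - f_1$. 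Hence the three values $P^{(3)}(f_j^D)$ share a common $\delta \in \mathbb{Z}_3$, and the shift above becomes $\varepsilon(\alpha+\beta+\gamma)$. Choosing any triple with $\alpha + \beta + \gamma \equiv -\varepsilon\delta \pmod 3$ then annihilates $P^{(3)}(f_j^{D'})$ for every $j$ simultaneously, yielding $\chi^4 \mid f_j^{D'}$ and hence $\text{sde}(HD'z,\chi) \leq f - 1 < f$, as required.

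The main obstacle I anticipate is synchronising the three coordinates $j = 1,2,3$: the perturbation only contributes the single combination $\alpha + \beta + \gamma$ uniformly to every $P^{(3)}(f_j^{D'})$, so one needs both that the three base values $P^{(3)}(f_j^D)$ already agree mod $3$ and that the coefficients of $\alpha,\beta,\gamma$ in the shift are all equal across $j$. Both properties follow from the single constraint $\hat p_0 = \hat q_0 = \hat r_0$ that the hypothesis imposes through the coarsest consequence of the unit-vector condition; without this collapse the three shifts would land in distinct cosets of $\mathbb{Z}_3$ and no single $D'$ of this form could work, which is exactly the obstruction $\Delta$ isolated in Theorem \ref{maintheorem}.
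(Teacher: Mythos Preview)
Your argument is correct and rests on the same perturbation as the paper: replace $D$ by $D' = D\cdot\text{diag}(\omega^\alpha,\omega^\beta,\omega^\gamma)$, observe via Lucas/Leibniz that the first three Taylor coefficients are unchanged while the order-$3$ coefficient shifts linearly in $(\alpha,\beta,\gamma)$ with nonzero coefficients $p_0,q_0,r_0$, and solve. The paper's proof is exactly this, written as ``keep the residues $\epsilon_i$ and vary the $k_i$''.

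The one genuine difference is that you do more work than necessary. You treat the problem as requiring $\chi^4\mid f_j^{D'}$ for \emph{all} $j$ simultaneously, and therefore spend effort proving (i) that the three base values $P^{(3)}(f_j^D)$ coincide and (ii) that the shift is uniform in $j$. The paper bypasses this entirely: it only arranges $\chi^4\mid f_1^{D'}$ (one linear equation in three unknowns over $\mathbb{Z}_3$, solvable since $p_0\neq 0$), and then Lemma~\ref{equalsde} applied to the unit vector $HD'z$ forces the other two entries to have the same sde automatically. Since you already invoke Lemma~\ref{equalsde} to get $\chi^3\parallel f_j^D$, you may as well invoke it again at the end and drop the synchronisation argument. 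That said, your computation that $P^{(3)}(f_2^D)-P^{(3)}(f_1^D)\equiv \hat q_0-\hat r_0\equiv 0$ is a pleasant direct check, and your observation that the hypothesis alone (via $\chi\mid f_1^D$) forces $\hat p_0=\hat q_0=\hat r_0$ --- without any preliminary $R^\epsilon X^\delta$ normalisation --- is a nice point that the paper leaves implicit.
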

\begin{proof}
    Let $(a,b,c) = (3k_1+ \epsilon_1, 3k_2+ \epsilon_2, 3k_3+ \epsilon_3).$ Since $\text{sde}(H) = 3$ the hypothesis is satisfied if and only if $f(1) = \frac{f^{(1)}(1)}{1!} = \frac{f^{(2)}(1)}{2!} = 0 \pmod{3}$. But the first three derivatives depend only on $\epsilon_i.$
    
\noindent Now, we may choose $(a',b',c') = (3k_1'+ \epsilon_1, 3k_2'+ \epsilon_2, 3k_3'+ \epsilon_3)$  so that  
\begin{align}
    \frac{f^{(3)}(1)}{3!} &= k_1' p_0 + k_2' q_0 + k_3' r_0 + (\epsilon_1 - \epsilon_1^{2})p_1 + (\epsilon_2 - \epsilon_2^{2})q_1 + (\epsilon_3 - \epsilon_3^{2})r_1 \\
    &\quad + \epsilon_1 p_2 + \epsilon_2 q_2 + \epsilon_3 r_3  + (p_3 + q_3 + r_3) = 0
\end{align}
    is satisfied. We can always do this as $p_0,q_0$ and $r_0$ are non-zero.
\end{proof}

\begin{theorem} \label{maintheorem}
    Let $z \in R_{\chi}^{3}$ be a unit vector, where $R = \mathbb{Z}[\xi]$ and $\chi = 1 - \xi$. There exists a polynomial $g(x_{1},x_{2}) = (x_1 - x_2)^{2} + \alpha(x_1 - x_2) + \beta \in \mathbb{Z}_{3}[x_{1},x_{2}]$ such that $\text{sde}(HDR^{\epsilon}z) \leq \text{sde}(z) - 1$ iff $g(x_1,x_2) = 0$ has a solution in $\mathbb{Z}_{3}^2$ iff $\Delta = \alpha^{2} - \beta \neq - 1.$
\end{theorem}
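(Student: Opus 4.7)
The plan is to reduce the sde-drop condition to the vanishing of the first three derivatives modulo $3$ of an auxiliary polynomial, and then to convert this into a single quadratic equation over $\mathbb{F}_3$. Write $z = \chi^{-f}(p(\xi),q(\xi),r(\xi))^{T}$ with $\chi \nmid p,q,r$, and, after absorbing the signs from $R^{\epsilon}$ (and the diagonal signs permitted in $\mathcal{D}$) into the entries, set $F(\xi) := \xi^{a_1}p(\xi) + \xi^{a_2}q(\xi) + \xi^{a_3}r(\xi)$, which is, up to a unit in $\mathbb{Z}[\xi]$, equal to $\chi^{f+3}$ times the first coordinate of $HDR^{\epsilon}z$. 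Since $\mathrm{sde}(H)=3$ and Lemma~\ref{equalsde} forces all entries of $HDR^{\epsilon}z$ to share the same sde, the inequality $\mathrm{sde}(HDR^{\epsilon}z) \leq \mathrm{sde}(z)-1$ is equivalent to $\chi^{4} \mid F(\xi)$. Proposition~\ref{reductionprop} allows weakening the target to $\chi^{3} \mid F(\xi)$, which by Theorem~\ref{derivativecriterion} becomes $F(1) \equiv F'(1) \equiv F''(1)/2 \equiv 0 \pmod{3}$; these first three derivatives depend on $a_i$ only through $\epsilon_i := a_i \bmod 3$.

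First I would normalize so that $(p_0, q_0, r_0) \equiv (1,1,1) \pmod 3$ using the sign freedom available in $\mathcal{D}$ and $R^{\epsilon}$, which makes $F(1)\equiv 0$ automatic. The two remaining conditions then read
$$E_1 : \epsilon_1 + \epsilon_2 + \epsilon_3 + \pi_1 \equiv 0, \qquad E_2 : \binom{\epsilon_1}{2} + \binom{\epsilon_2}{2} + \binom{\epsilon_3}{2} + \epsilon_1 p_1 + \epsilon_2 q_1 + \epsilon_3 r_1 + \pi_2 \equiv 0 \pmod{3},$$
where $\pi_k := p_k + q_k + r_k$. Use $E_1$ to eliminate $\epsilon_3 = -\epsilon_1 - \epsilon_2 - \pi_1$ and substitute into $E_2$. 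The crux, and what I expect to be the main obstacle, is to verify that the resulting quadratic in $(\epsilon_1, \epsilon_2)$ factors through the single variable $\epsilon_1 - \epsilon_2$; this will rely on two identities modulo $3$, namely $\binom{\epsilon}{2} \equiv \epsilon - \epsilon^{2}$ and the slightly surprising $x_1^2 + x_1 x_2 + x_2^2 \equiv (x_1 - x_2)^2$ (because $-2 \equiv 1$). Careful expansion yields $g(x_1, x_2) = (x_1 - x_2)^2 + (q_1 - p_1)(x_1 - x_2) + \beta$, and after substituting $\pi_2 = \pi_1 + p_1^2 + q_1^2 + r_1^2$ from Lemma~\ref{secondder} the constant simplifies to $\beta = p_1 q_1 - r_1^2$.

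Finally, a zero of $g(x_1, x_2) = 0$ in $\mathbb{Z}_3^{2}$ exists iff the single-variable quadratic $y^2 + \alpha y + \beta = 0$ has a root in $\mathbb{F}_3$ (any such $y$ lifts freely to $(x_1, x_2)$ satisfying $x_1 - x_2 = y$), iff its discriminant $\alpha^2 - 4\beta \equiv \alpha^2 - \beta \pmod{3}$ is a square in $\mathbb{F}_3$. Since the squares in $\mathbb{F}_3$ are exactly $\{0, 1\}$, this fails precisely when $\Delta := \alpha^2 - \beta \equiv -1 \pmod{3}$, yielding the desired equivalence. As a reassuring sanity check, one can verify that $\Delta \equiv p_1^2 + q_1^2 + r_1^2 \pmod{3}$, which is symmetric in $p,q,r$ and invariant under sign flips, reflecting the fact that the obstruction does not depend on the normalization choices.
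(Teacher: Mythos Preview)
Your proposal is correct and follows essentially the same route as the paper: normalize to $(p_0,q_0,r_0)=(1,1,1)$, invoke Proposition~\ref{reductionprop} to reduce to the first three derivatives, eliminate one variable from the resulting linear--quadratic system, observe that the quadratic part collapses to $(x_1-x_2)^2$ over $\mathbb{F}_3$, and finish with the discriminant criterion. Your simplification $\beta\equiv p_1q_1-r_1^2$ and the check $\Delta\equiv p_1^2+q_1^2+r_1^2$ match the computation the paper carries out in the remark following the proof.
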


\begin{proof}
As it was discussed in Section 5.2, for the unit vector $z = \frac{1}{\chi^f}(p,q,r)^{\text{T}},$ either $(p(1),q(1),r(1)) = \pm (1,1,1)$ or $\pm (1,1,-1) \text{ or } \pm (1,-1,1) \text{ or } \pm (-1,1,1).$ The latter two cases can be reduced to $\pm (1,1,-1)$ by applying the Pauli gates $X$ or $X^2$. Now, for the first case choose $\epsilon = 0$ and for the latter, $\epsilon = 1$ i.e., $\epsilon$ is chosen in such a way that $R^{\epsilon}(p(1),q(1),r(1)) = \pm (1,1,1)$. In other words, we may replace $z$ with $R^{\epsilon}X^{f}z$ for appropriate $\epsilon \in \{0,1\}, f\in \{0,1,2\}$ and assume that $(p(1),q(1),r(1)) = \pm (1,1,1).$

\noindent To choose $D$ such that $\text{sde}(HDz) \leq \text{sde}(z) - 1$ is equivalent to choosing $0 \leq a_{1}, a_{2}, a_{3} \leq 8$ such that $\chi^{4}$ divides $f(\xi).$ By Theorem \ref{derivativecriterion}, $\chi^4$ dividing $f(\xi)$ is equivalent to $f(1) = \frac{f^{(1)}(1)}{1!} = \frac{f^{(2)}(1)}{2!} = \frac{f^{(3)}(1)}{3!} = 0 \pmod{3}.$ Note that $f(1) = p(1) + q(1) + r(1) \pmod{3} = 0$ is automatically satisfied. Furthermore, by Proposition \ref{reductionprop}, if there exists a diagonal cyclotomic $D$ such that $\text{sde}(HDu) \leq \text{sde}(z),$ there exists a $D'$ such that $\text{sde}(HD'z) \leq \text{sde}(z) - 1$. Therefore, we are reduced to finding $a_{i}$'s such that $\frac{f^{(1)}(1)}{1!} = \frac{f^{(2)}(1)}{2!} = 0 \pmod 3$, i.e., $$a_{1}p(1) + a_{2}q(1) + a_{3}r(1) + p'(1) + q'(1) + r'(1) = 0$$ and 
$${a_{1}\choose2}p(1) + {a_2 \choose 2} q(1) + {a_3 \choose 2 }r(1) + a_{1}p'(1) + a_{2}q'(1) + a_{3}r'(1) + \frac{p^{(2)}(1)}{2!} +  \frac{q^{(2)}(1)}{2!} + \frac{r^{(2)}(1)}{2!} = 0$$ 

They simplify as $$\pi_1 + \sum_{i=1}^{3} a_{i}  = 0$$ and $$\sum_{i=1}^{3} (a_i - a_{i}^{2}) + a_{1}p'(1) + a_{2}q'(1) + a_{3}r'(1) + \pi_2 = 0$$
respectively, where $\pi_{i} = p_{i} + q_{i} + r_{i}$.
By writing $a_3 = -\pi_1 - a_1 - a_2$ and substituting in the second equation, we get a polynomial $g(a_{1},a_{2}) = -\pi_1 + \pi_{2} -a_1^2 - a_2^2 - (-\pi_1 - a_1 - a_2)^2 + a_{1}p'(1) + a_{2}q'(1) + (-\pi_1 - a_{1} -a_{2})r'(1) = -\pi_1 + \pi_{2} -\pi_{1}^2  -\pi_1 r'(1) + (\pi_1 + p'(1) - r'(1))a_{1} + (\pi_1 + q'(1) -r'(1))a_2 + (a_{1}^2 + a_2^2 + a_{1}a_{2}) = \gamma + \alpha (a_{1} - a_{2}) + (a_1 - a_2)^2,$ where $$\alpha = q'(1) - p'(1)$$ and $$\gamma = -\pi_1 + \pi_{2} -\pi_{1}^2  -\pi_1 r'(1) $$

\noindent Furthermore, $g(a_1,a_{2}) = 0$ has a solution in $\mathbb{Z}_{3}^2$ iff $ \Delta = \alpha^{2} - \gamma \neq -1$

\end{proof}
\begin{remark}
    
\noindent We explicitly describe the element $\Delta := \alpha^2 - \gamma \in \mathbb{Z}_3$ associated to the unit vector $z$ in terms of the derivatives of the numerators of entries of $z$ in their reduced form in $R_9,\chi.$ From the formulae for $\Delta$ in the proof, we have $\Delta = \alpha^2 - \gamma = (q_1 - p_1)^2 - (-\pi_1 + \pi_2 - \pi_1^2 - \pi_1r_1) = q_1^2 + p_1^2 + p_1q_1 + \pi_1 - \pi_2 + p_1^2 + q_1^2 + r_1^2 + 2p_1q_1 + 2q_1r_1 + 2p_1r_1 + p_1r_1 + q_1r_1 + r_1^2 = 2(p_1^2 + q_1^2 + r_1^2) + \pi_1 - \pi_2.$

\noindent But if $(p_0,q_0,r_0) = (1,1,1)$ then by Lemma \ref{secondder} we have,  $$-\pi_2 + \pi_1 +  p_1^2 + q_1^2 + r_1^2  = 0.$$Therefore, $\Delta = p_1^2 + q_1^2 + r_1^2.$

\end{remark}

\section{Conclusion}
In this paper we provide a general framework for qutrit synthesis over several families of universal gate sets. We used the notion of `derivatives mod $p$' to reproduce and extend the results in \cite{kmm} and \cite{vadymanyons}. 
This is achieved by relating the problem of synthesis to solving a system of polynomial equations mod $p$. We explored this for $p=2$ and $3$ in this paper.
On the other hand, we can try applying the same technique to multiqutrit synthesis. 

\noindent While the above method helped us derive the results in Section $5$, it remains unknown whether $U(3,R_{9,\chi})$ is finitely generated where the maximum sde of a generating set is relatively small. In such a case, we can in fact search for such a generating set using the methods of Section $4$. Consequently, this could help us achieve the exact analogue of \cite{kmm} for qutrits.

\noindent Moreover, the techniques in this paper do not fully apply for $p \geq 5$, as Lemma \ref{equalsde} fails. Indeed, for $p = 5$ we see that $2^2 + 1^2 + 0^2 + 0^2 + 0^2 \equiv 0 \pmod{5}.$ This suggests that there could be unit vectors such that the sde of the entries are not equal to each other. Therefore, the analogous systems of polynomials over $\mathbb{Z}_p$ are far too many to analyze as in the previous section for $p=2,3.$ We intend to explore this direction in the future. 

\label{conclusion}
\section{Acknowledgements}
ARK and DV would like to thank Neil Julien Ross, Peter Selinger, Jon Yard, Shiroman Prakash, Yash Totani, Vadym Kliuchnikov, Richard Cleve and David Gosset for sharing their understanding of various aspects of circuit synthesis and for suggesting relevant references at different times of the project. A majority of this work was done while DV was at IQC as a postdoc, he thanks David Jao for support. 

\noindent ARK and MM thank NTT research for financial and technical support. Research at IQC is supported in part by the Government of Canada through Innovation, Science and Economic Development Canada (ISED). 

\bibliography{qutrit}

\begin{thebibliography}{26}
\providecommand{\natexlab}[1]{#1}
\providecommand{\url}[1]{\texttt{#1}}
\expandafter\ifx\csname urlstyle\endcsname\relax
  \providecommand{\doi}[1]{doi: #1}\else
  \providecommand{\doi}{doi: \begingroup \urlstyle{rm}\Url}\fi

\bibitem[Anderson et~al.(2015)Anderson, Sosa-Martinez, Riofr\'{\i}o, Deutsch,
  and Jessen]{PhysRevLett.114.240401}
B.~E. Anderson, H.~Sosa-Martinez, C.~A. Riofr\'{\i}o, Ivan~H. Deutsch, and
  Poul~S. Jessen.
\newblock Accurate and robust unitary transformations of a high-dimensional
  quantum system.
\newblock \emph{Phys. Rev. Lett.}, 114:\penalty0 240401, Jun 2015.
\newblock \doi{10.1103/PhysRevLett.114.240401}.

\bibitem[Bocharov et~al.(2016)Bocharov, Cui, Kliuchnikov, and
  Wang]{vadymanyons}
Alex Bocharov, Xingshan Cui, Vadym Kliuchnikov, and Zhenghan Wang.
\newblock Efficient topological compilation for a weakly integral anyonic
  model.
\newblock \emph{Physical Review A}, 93\penalty0 (1), jan 2016.
\newblock \doi{10.1103/physreva.93.012313}.

\bibitem[Campbell et~al.(2012)Campbell, Anwar, and Browne]{PhysRevX.2.041021}
Earl~T. Campbell, Hussain Anwar, and Dan~E. Browne.
\newblock Magic-state distillation in all prime dimensions using quantum
  reed-muller codes.
\newblock \emph{Phys. Rev. X}, 2:\penalty0 041021, Dec 2012.
\newblock \doi{10.1103/PhysRevX.2.041021}.

\bibitem[Cui et~al.(2017)Cui, Gottesman, and Krishna]{Cui_2017}
Shawn~X. Cui, Daniel Gottesman, and Anirudh Krishna.
\newblock Diagonal gates in the clifford hierarchy.
\newblock \emph{Physical Review A}, 95\penalty0 (1), jan 2017.
\newblock \doi{10.1103/physreva.95.012329}.

\bibitem[Evra and Parzanchevski(2024)]{evra2024arithmeticity}
Shai Evra and Ori Parzanchevski.
\newblock Arithmeticity and covering rate of the 9-cyclotomic clifford+d gates
  in pu (3).
\newblock \emph{arXiv preprint arXiv:2401.16120}, 2024.
\newblock \doi{https://doi.org/10.48550/arXiv.2401.16120}.

\bibitem[Forest et~al.(2015)Forest, Gosset, Kliuchnikov, and
  McKinnon]{Forest_2015}
Simon Forest, David Gosset, Vadym Kliuchnikov, and David McKinnon.
\newblock Exact synthesis of single-qubit unitaries over clifford-cyclotomic
  gate sets.
\newblock \emph{Journal of Mathematical Physics}, 56\penalty0 (8):\penalty0
  082201, aug 2015.
\newblock \doi{10.1063/1.4927100}.

\bibitem[Glaudell et~al.(2019)Glaudell, Ross, and Taylor]{GLAUDELL201954}
Andrew~N. Glaudell, Neil~J. Ross, and Jacob~M. Taylor.
\newblock Canonical forms for single-qutrit clifford+t operators.
\newblock \emph{Annals of Physics}, 406:\penalty0 54--70, 2019.
\newblock ISSN 0003-4916.
\newblock \doi{https://doi.org/10.1016/j.aop.2019.04.001}.

\bibitem[Glaudell et~al.(2022)Glaudell, Ross, van~de Wetering, and
  Yeh]{metaplectic}
Andrew~N. Glaudell, Neil~J. Ross, John van~de Wetering, and Lia Yeh.
\newblock Qutrit metaplectic gates are a subset of clifford+t.
\newblock Schloss Dagstuhl - Leibniz-Zentrum für Informatik, 2022.
\newblock \doi{10.4230/LIPICS.TQC.2022.12}.

\bibitem[Glaudell et~al.(2024)Glaudell, Ross, van~de Wetering, and
  Yeh]{Glaudell_2024}
Andrew~N. Glaudell, Neil~J. Ross, John van~de Wetering, and Lia Yeh.
\newblock Exact synthesis of multiqutrit clifford-cyclotomic circuits.
\newblock \emph{Electronic Proceedings in Theoretical Computer Science},
  406:\penalty0 44–62, August 2024.
\newblock ISSN 2075-2180.
\newblock \doi{10.4204/eptcs.406.2}.

\bibitem[Howard and Vala(2012)]{PhysRevA.86.022316}
Mark Howard and Jiri Vala.
\newblock Qudit versions of the qubit $\ensuremath{\pi}/8$ gate.
\newblock \emph{Phys. Rev. A}, 86:\penalty0 022316, Aug 2012.
\newblock \doi{10.1103/PhysRevA.86.022316}.

\bibitem[Howard et~al.(2014)Howard, Wallman, Veitch, and Emerson]{Howard_2014}
Mark Howard, Joel Wallman, Victor Veitch, and Joseph Emerson.
\newblock Contextuality supplies the `magic' for quantum computation.
\newblock \emph{Nature}, 510\penalty0 (7505):\penalty0 351--355, jun 2014.
\newblock \doi{10.1038/nature13460}.

\bibitem[Kalra et~al.(2024)Kalra, Saikia, Valluri, Winnick, and
  Yard]{kalra2024multi}
Amolak~Ratan Kalra, Manimugdha Saikia, Dinesh Valluri, Sam Winnick, and Jon
  Yard.
\newblock Multi-qutrit exact synthesis.
\newblock \emph{arXiv preprint arXiv:2405.08147}, 2024.
\newblock \doi{https://doi.org/10.48550/arXiv.2405.08147}.

\bibitem[Klimov et~al.(2003)Klimov, Guzm\'an, Retamal, and
  Saavedra]{PhysRevA.67.062313}
A.~B. Klimov, R.~Guzm\'an, J.~C. Retamal, and C.~Saavedra.
\newblock Qutrit quantum computer with trapped ions.
\newblock \emph{Phys. Rev. A}, 67:\penalty0 062313, Jun 2003.
\newblock \doi{10.1103/PhysRevA.67.062313}.

\bibitem[Kliuchnikov et~al.(2013)Kliuchnikov, Maslov, and Mosca]{kmm}
Vadym Kliuchnikov, Dmitri Maslov, and Michele Mosca.
\newblock Fast and efficient exact synthesis of single-qubit unitaries
  generated by clifford and t gates.
\newblock \emph{Quantum Info. Comput.}, 13\penalty0 (7–8):\penalty0
  607–630, jul 2013.
\newblock ISSN 1533-7146.
\newblock \doi{https://doi.org/10.26421/QIC13.7-8-4}.

\bibitem[Kliuchnikov et~al.(2015)Kliuchnikov, Bocharov, Roetteler, and
  Yard]{yardapprox}
Vadym Kliuchnikov, Alex Bocharov, Martin Roetteler, and Jon Yard.
\newblock A framework for approximating qubit unitaries.
\newblock \emph{arXiv preprint arXiv:1510.03888}, 2015.
\newblock \doi{https://doi.org/10.48550/arXiv.1510.03888}.

\bibitem[Low et~al.(2023)Low, White, and Senko]{low2023control}
Pei~Jiang Low, Brendan White, and Crystal Senko.
\newblock Control and readout of a 13-level trapped ion qudit.
\newblock \emph{arXiv preprint arXiv:2306.03340}, 2023.
\newblock \doi{https://doi.org/10.48550/arXiv.2306.03340}.

\bibitem[Moreno-Pineda et~al.(2018)Moreno-Pineda, Godfrin, Balestro,
  Wernsdorfer, and Ruben]{C5CS00933B}
Eufemio Moreno-Pineda, Clément Godfrin, Franck Balestro, Wolfgang Wernsdorfer,
  and Mario Ruben.
\newblock Molecular spin qudits for quantum algorithms.
\newblock \emph{Chem. Soc. Rev.}, 47:\penalty0 501--513, 2018.
\newblock \doi{10.1039/C5CS00933B}.

\bibitem[Prakash and Gupta(2020)]{PhysRevA.101.010303}
Shiroman Prakash and Aashi Gupta.
\newblock Contextual bound states for qudit magic state distillation.
\newblock \emph{Phys. Rev. A}, 101:\penalty0 010303, Jan 2020.
\newblock \doi{10.1103/PhysRevA.101.010303}.

\bibitem[Prakash and Saha(2024)]{prakash2024low}
Shiroman Prakash and Tanay Saha.
\newblock Low overhead qutrit magic state distillation.
\newblock 2024.
\newblock \doi{https://doi.org/10.48550/arXiv.2403.06228}.

\bibitem[Prakash et~al.(2018)Prakash, Jain, Kapur, and Seth]{prakash2018normal}
Shiroman Prakash, Akalank Jain, Bhakti Kapur, and Shubangi Seth.
\newblock Normal form for single-qutrit clifford+$t$ operators and synthesis of
  single-qutrit gates.
\newblock \emph{Phys. Rev. A}, 98:\penalty0 032304, Sep 2018.
\newblock \doi{10.1103/PhysRevA.98.032304}.

\bibitem[Prakash et~al.(2021)Prakash, Kalra, and Jain]{prakash2021normal}
Shiroman Prakash, Amolak~Ratan Kalra, and Akalank Jain.
\newblock A normal form for single-qudit clifford+t operators.
\newblock \emph{Quantum Information Processing}, 20\penalty0 (10), October
  2021.
\newblock ISSN 1570-0755.
\newblock \doi{10.1007/s11128-021-03280-0}.

\bibitem[Ringbauer et~al.(2022)Ringbauer, Meth, Postler, Stricker, Blatt,
  Schindler, and Monz]{Ringbauer_2022}
Martin Ringbauer, Michael Meth, Lukas Postler, Roman Stricker, Rainer Blatt,
  Philipp Schindler, and Thomas Monz.
\newblock A universal qudit quantum processor with trapped ions.
\newblock \emph{Nature Physics}, 18\penalty0 (9):\penalty0 1053--1057, jul
  2022.
\newblock \doi{10.1038/s41567-022-01658-0}.

\bibitem[Ross(2014)]{ross2014optimal}
Neil~J Ross.
\newblock Optimal ancilla-free clifford+ v approximation of z-rotations.
\newblock \emph{arXiv preprint arXiv:1409.4355}, 2014.
\newblock \doi{https://doi.org/10.48550/arXiv.1409.4355}.

\bibitem[Subramanian and Lupascu(2023)]{subramanian2023efficient}
Mahadevan Subramanian and Adrian Lupascu.
\newblock Efficient two-qutrit gates in superconducting circuits using
  parametric coupling.
\newblock \emph{Phys. Rev. A}, 108:\penalty0 062616, Dec 2023.
\newblock \doi{10.1103/PhysRevA.108.062616}.

\bibitem[Wang et~al.(2020)Wang, Hu, Sanders, and Kais]{qditrev}
Yuchen Wang, Zixuan Hu, Barry~C. Sanders, and Sabre Kais.
\newblock Qudits and high-dimensional quantum computing.
\newblock \emph{Frontiers in Physics}, 8, 2020.
\newblock ISSN 2296-424X.
\newblock \doi{10.3389/fphy.2020.589504}.

\bibitem[Watson et~al.(2015)Watson, Campbell, Anwar, and
  Browne]{PhysRevA.92.022312}
Fern H.~E. Watson, Earl~T. Campbell, Hussain Anwar, and Dan~E. Browne.
\newblock Qudit color codes and gauge color codes in all spatial dimensions.
\newblock \emph{Phys. Rev. A}, 92:\penalty0 022312, Aug 2015.
\newblock \doi{10.1103/PhysRevA.92.022312}.

\end{thebibliography}
\bibliographystyle{plainnat}
\end{document}